\theoremstyle{plain} 
    \newtheorem{theorem}{Theorem}
    \newtheorem{lemma}[theorem]{Lemma}
    \newtheorem*{lemma*}{Lemma}
    \newtheorem{proposition}[theorem]{Proposition}
\theoremstyle{definition}
\theoremstyle{remark}
\renewcommand{\mathbf}{\boldsymbol}
\renewcommand{\geq}{\geqslant}
\renewcommand{\leq}{\leqslant}
\renewcommand{\tilde}{\widetilde}
\newcommand*{\set}[1]{\mathcal{#1}} 
\newcommand*{\fnc}[1]{\mathrm{#1}} 
\newcommand*{\rv}[1]{\mathsf{#1}} 
\newcommand*{\rvs}[1]{\mathbf{\mathsf{#1}}} 
\newcommand*{\sys}[1]{\mathsf{#1}} 
\newcommand*{\syss}[1]{\mathbf{\mathsf{#1}}} 
\newcommand\diff{\mathop{}\!\mathrm{d}} 
\newcommand*{\tensor}{\otimes}
\DeclareMathOperator*{\Tensor}{\bigotimes}
\DeclareMathOperator{\tr}{tr}
\DeclareMathOperator{\Span}{span}
\newcommand{\hilbert}{\mathcal{H}}
\newcommand{\id}{\fnc{id}}
\newcommand{\densop}{\set{L}_1^+}
\newcommand{\posop}{\set{L}^+}
\newcommand{\linop}{\set{L}}
\newcommand{\defeq}{\coloneqq}
\newcommand{\defas}{\eqqcolon}
\DeclarePairedDelimiterX{\ket}[1]{\lvert}{\rangle}{#1}
\DeclarePairedDelimiterX{\bra}[1]{\langle}{\rvert}{#1}
\DeclarePairedDelimiterX{\spr}[2]{\langle}{\rangle}{#1\delimsize\vert#2}
\DeclarePairedDelimiterX{\proj}[1]{\lvert}{\rvert}{#1\delimsize\rangle\!\delimsize\langle#1}
\DeclarePairedDelimiterX{\floor}[1]{\lfloor}{\rfloor}{#1}
\DeclarePairedDelimiterX{\ceil}[1]{\lceil}{\rceil}{#1}
\DeclarePairedDelimiterX{\abs}[1]{\lvert}{\rvert}{#1}
\DeclarePairedDelimiterX{\norm}[1]{\lVert}{\rVert}{#1}
\DeclarePairedDelimiterX{\size}[1]{\lvert}{\rvert}{#1}
\DeclarePairedDelimiterX{\infdiv}[2]{(}{)}{#1\delimsize\Vert#2}
\DeclarePairedDelimiterX{\inner}[2]{\langle}{\rangle}{#1,#2}
\NewDocumentCommand{\multiadjustlimits}{m}
 {
  \group_begin:
  \multiadjustlimits_measure:n { #1 }
  \multiadjustlimits_print:n { #1 }
  \group_end:
 }
\newcommand\ie{\textit{i.e.}}
\newcommand\eg{\textit{e.g.}}
\newcommand\cf{\textit{cf.}}
\newcommand\wrt{w.r.t.~}
\begin{document}
\title{\Large \bfseries Channel coding against quantum jammers via minimax}
\author{
    Michael X. Cao, Yongsheng Yao, and Mario Berta\\
    {\small Institute for Quantum Information, RWTH Aachen University, Germany}
}
\date{\vspace{-1em}}
\maketitle
\footnotetext{An earlier version of this draft was uploaded to arXiv, but the current draft has since diverged. We are in the process of revising it for a later re-upload.}
\begin{abstract}
We introduce a minimax approach for characterizing the capacities of fully quantum arbitrarily varying channels (FQAVCs) under different shared resource models. 
In contrast to previous methods, our technique avoids de Finetti-type reductions, providing a more streamlined proof without dependency on the dimension of the jamming system.
Consequently, we show that the entanglement-assisted and shared-randomness-assisted capacities of FQAVCs match those of the corresponding compound channels, even in the presence of general quantum adversaries. 
\end{abstract}
\section{Introduction}
The study of the effect of an adversarial inputting party (also known as a \emph{jammer}) on communication is a profound topic in information theory, with foundational work dating back to the late 1950s~\cite{blackwell1959capacity, blackwell1960capacities}.
Depending on whether the jammer is memoryless or attacks in a coordinated manner across multiple channel uses, such setups are typically modeled as compound channels~\cite{wolfowitz1961coding} or arbitrarily varying channels (AVCs)~\cite{ahlswede1970note}, respectively (see~\cite{csiszar2011information} for a textbook).
It is noteworthy that, under the assumption of free shared randomness between the sender and receiver, the two models have the same channel-coding capacity (assuming they arise from a same \emph{convex} set of channels)~\cite{blackwell1960capacities}.
Furthermore, for compound channels, shared randomness does not increase asymptotic capacity~\cite{ahlswede1978eliminierung}, \ie,
\begin{equation}\label{eq:avc=comp:SR}
C_\fnc{avc}^{\fnc{SR}}(W_{\rv{B}|\rv{AE}}) \overset{\text{convexity assumption}}{=} C_\fnc{comp}^{\fnc{SR}}(W_{\rv{B}|\rv{AE}}) = C_\fnc{comp}(W_{\rv{B}|\rv{AE}}).
\end{equation}
A yet more intriguing fact is that shared randomness \emph{does} help for AVCs; in particular, there exist channels for which $C_\fnc{avc}^\fnc{SR}>0$ but $C_\fnc{avc}=0$~\cite{blackwell1960capacities} (and \cite[Example~2]{csiszar1988capacity}). Combined with~\eqref{eq:avc=comp:SR}, this highlights that the jammer is a more \emph{formidable foe} when equipped with (classical) memory.
Note that throughout this paper, we at least allow the adversary (Eve) to generate the jamming signal stochastically at \emph{each} channel use. This assumption ensures that the set of channels considered under the compound channel or arbitrarily varying channel models remains convex.

Over the past two decades, the study of adversarial input parties has gradually extended into the domain of quantum information theory.
The classical capacity of arbitrarily varying classical-quantum (CQ) channels was investigated in~\cite{ahlswede2007classical}, while the quantum capacity of arbitrarily varying quantum channels was addressed in~\cite{ahlswede2013quantum, dasgupta2025universal}. 
Furthermore, the entanglement-assisted classical capacities of compound and arbitrarily varying quantum channels were studied in~\cite{berta2017entanglement, boche2017entanglement}. 
In particular, it is known that~\cite{boche2017entanglement},
\begin{equation}
C_\fnc{avc}^\fnc{EA}(\mathcal{N}_{\sys{AE}\to\sys{B}}) = \underbrace{\adjustlimits\sup_{\rho_{\sys{A}'}} \inf_{\sigma_\sys{E}} I(\sys{A}':\sys{B})_{(\id_{\sys{A}'}\tensor \mathcal{N}_{\sys{AE}\to\sys{B}})(\proj{\rho}_{\sys{A}'\sys{A}}\tensor \sigma_{\sys{E}})}}_{\defas I_{\fnc{adv.}\sys{E}}(\mathcal{N}_{\sys{AE}\to\sys{B}})}
\overset{\text{convexity assumption}}{=}  C_\fnc{comp}^\fnc{EA}(\mathcal{N}_{\sys{AE}\to\sys{B}})
\end{equation}
where $I$ denotes the quantum mutual information, and $\ket{\rho}_{\sys{A}'\sys{A}}$ the canonical purification of $\rho_{\sys{A}'}$, \ie, $\ket{\rho}_{\sys{A}'\sys{A}} = (\sqrt{\rho_{\sys{A}'}}\tensor I_{\sys{A}})  \ket{\gamma}_{\sys{A}'\sys{A}}$ with $\ket{\gamma}_{\sys{A}'\sys{A}}$ being the maximum entangled state on systems $\sys{A}'\sys{A}$ with $\sys{A}'\equiv\sys{A}$. However, in all of the above setups, the channel variability is driven by a classical adversary -- that is, the jammer is at most equipped with classical memory.\footnote{Even when the jamming interface is quantum, if the jammer inputs are restricted to product states, the resulting model is effectively equivalent to a (classical) AVC, and there is no need to treat the jammer inputs as genuinely quantum.}

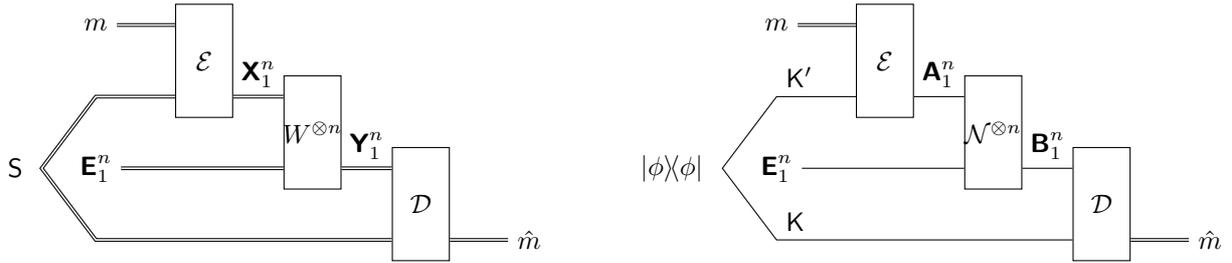
\begin{figure}[t]
\centering
\begin{subfigure}[t]{.48\textwidth}
\begin{tikzpicture}[x = 1.5cm, y = -1cm, scale=.95]
    \node (M) at (0,0) {$m$};
    \node (K') at (0,1) {};
    \node (E) at (0,2) {$\syss{E}_1^n$};
    \node (K) at (0,3) {};
    \node (KK') at (-0.5,2) {};
    \node (enc) [rectangle, draw, minimum height=1.5cm, minimum width = 0.75cm] at (1,0.5) {$\mathcal{E}$};
    \node (channels) [rectangle, draw, minimum height=1.5cm, minimum width = 0.75cm] at (2,1.5) {}; \node at (channels) {$W^{\tensor n}$};
    \node (dec) [rectangle, draw, minimum height=1.5cm, minimum width = 0.75cm] at (3,2.5) {$\mathcal{D}$};
    \node (hatM) at (4,3) {$\hat{m}$};
    
    \draw[double] (enc.west|-K'.center) -- (K'.center) -- (KK'.center) -- (K.center) -- (K.center-|dec.west);
    \node [anchor=east] at (KK'.west) {$\rv{S}$};

    \draw[double] (M) -- (M-|enc.west);
    \path (enc.east|-K') edge[draw, double] node[above] {$\rvs{X}_1^n$} (channels.west|-K');
    \draw[double] (E) -- (E-|channels.west);
    \path (channels.east|-E) edge[draw, double] node[above] {$\rvs{Y}_1^n$} (dec.west|-E);
    \draw[double] (dec.east|-hatM) -- (hatM);
\end{tikzpicture}
\caption{Classical communication using $n$ copies of the compound/arbitrarily varying (classical) channel $W_{\rv{Y}|\rv{XE}}$ where the encoder (Alice) and the decoder (Bob) shares some classical random variable $\rv{S}$.}
\label{fig:AVC:n}
\end{subfigure}
\hfill
\begin{subfigure}[t]{.48\textwidth}
\begin{tikzpicture}[x = 1.5cm, y = -1cm, scale=.95]
    \node (M) at (0,0) {$m$};
    \node (K') at (0,1) {};
    \node (E) at (0,2) {$\syss{E}_1^n$};
    \node (K) at (0,3) {};
    \node (KK') at (-0.5,2) {};
    \node (enc) [rectangle, draw, minimum height=1.5cm, minimum width = 0.75cm] at (1,0.5) {$\mathcal{E}$};
    \node (channels) [rectangle, draw, minimum height=1.5cm, minimum width = 0.75cm] at (2,1.5) {}; \node at (channels) {$\mathcal{N}^{\tensor n}$};
    \node (dec) [rectangle, draw, minimum height=1.5cm, minimum width = 0.75cm] at (3,2.5) {$\mathcal{D}$};
    \node (hatM) at (4,3) {$\hat{m}$};
    
    \draw (enc.west|-K'.center) -- (K'.center) -- (KK'.center) -- (K.center) -- (K.center-|dec.west);
    \node [anchor=east] at (KK'.west) {$\proj{\phi}$};
    \node [anchor=south west] at (K'.center) {$\sys{K}'$};
    \node [anchor=south west] at (K.center) {$\sys{K}$};

    \draw[double] (M) -- (M-|enc.west);
    \path (enc.east|-K') edge[draw] node[above] {$\syss{A}_1^n$} (channels.west|-K');
    \draw (E) -- (E-|channels.west);
    \path (channels.east|-E) edge[draw] node[above] {$\syss{B}_1^n$} (dec.west|-E);
    \draw[double] (dec.east|-hatM) -- (hatM);
\end{tikzpicture}
\caption{Classical communication using $n$ copies of the fully quantum arbitrarily varying channel $\mathcal{N}_{\sys{AE}\to\sys{B}}$ where the encoder (Alice) and the decoder (Bob) shares a pair of pure entangled state $\ket{\phi}_{\sys{K}'\sys{K}}$.}
\label{fig:FQAVC:n}
\end{subfigure}
\caption{Point-to-point communication with an interfering adversary,
where in both figures $\mathcal{E}$ is the encoder, $\mathcal{D}$ is the decoder, the adversary (Eve) controls the systems $\syss{E}_1^n$. The goal for Alice and Bob is to minimize the probability that $\hat{m}\neq m$ without knowing Eve's messages.
}
\label{fig:jamable-channels}
\end{figure}

Recently, the adversarial communication model has been extended to its fully quantum form, giving rise to the concept of \emph{fully quantum arbitrarily varying channels} (FQAVCs)~\cite{boche2018fully}, where the jammer inputs are allowed to be an arbitrary quantum state (see Figure~\ref{fig:FQAVC:n}). 
Interestingly, a similar setup had also been considered in~\cite{karumanchi2016quantum, karumanchi2016classical}, though with a benevolent interference party instead.  
It was shown in~\cite{boche2018fully}, and later in~\cite{belzig2024fully} for entanglement-assisted cases that, given a \emph{finite-dimensional} jammer system, the FQAVC has the same shared-randomness-assisted/entanglement-assisted capacity as the corresponding compound channel.
That is, utilizing entanglement by the adversary does not further reduce the communication rate asymptotically:
\begin{align}
\label{eq:fqavc=comp:finite:sr}
C_\fnc{fqavc}^\fnc{SR}(\mathcal{N}_{\sys{AE}\to\sys{B}})&\overset{\text{convexity assumption}}{=} C_\fnc{comp}^\fnc{SR}(\mathcal{N}_{\sys{AE}\to\sys{B}}),\\
\label{eq:fqavc=comp:finite:ea}
C_\fnc{fqavc}^\fnc{EA}(\mathcal{N}_{\sys{AE}\to\sys{B}})&\overset{\text{convexity assumption}}{=} C_\fnc{comp}^\fnc{EA}(\mathcal{N}_{\sys{AE}\to\sys{B}}).
\end{align}
Both proofs utilize a de Finetti-type reduction to construct a sequence of codes with diminishing error for the FQAVC $\mathcal{N}_{\sys{AE}\to\sys{B}}$ from codes for the compound channel with the same rate, and thus requires the jammer system to be finite dimensional for the error to vanish asymptotically. We summarize the known results in Table~\ref{tab:summary:finite:eve}.
\begin{table}
\centering
\begin{tikzpicture}[x = 1cm, y = -1cm]
    \node[draw=none, inner sep = 0, outer sep = 0] (table) {
    \begin{tabular}{|c|c|c|c|}
    \hline
    & Composite & AVC & FQAVC \\
    \hline
    Non-assisted & \cellcolor{gray!30} $C_\fnc{comp}$ & $C_\fnc{avc}\in\{0, C_\fnc{avc}^\fnc{SR}\}$ & $C_\fnc{fqavc}\in\{0, C_\fnc{fqavc}^\fnc{SR}\}$~\cite{boche2018fully}\\
    \hline
    Shared-randomness & \cellcolor{gray!30} $C_\fnc{comp}^\fnc{SR}$ & \cellcolor{gray!30} $C_\fnc{avc}^\fnc{SR}=C_\fnc{comp}^\fnc{SR}$~\cite{blackwell1960capacities} & \cellcolor{gray!30} $C_\fnc{fqavc}^\fnc{SR}=C_\fnc{comp}^\fnc{SR}$~\cite{boche2018fully}\\
    \hline
    Entanglement-assisted & \cellcolor{gray!70} $C_\fnc{comp}^\fnc{EA}$ & \cellcolor{gray!70} $C_\fnc{avc}^\fnc{EA}=C_\fnc{comp}^\fnc{EA}$~\cite{boche2017entanglement}& \cellcolor{gray!70} $C_\fnc{fqavc}^\fnc{EA}=C_\fnc{comp}^\fnc{EA}$~\cite{belzig2024fully}\\
    \hline
    \end{tabular}};
    
    \path ([yshift=7pt, xshift=120pt]table.north west) edge[-latex, thick] node[above, font=\scriptsize] {non-increasing} ([yshift=7pt]table.north east);
    \path ([xshift=7pt, yshift=-15pt]table.north east) edge[-latex, thick] node[above, sloped, font=\scriptsize, pos=.37]{non-decreasing} ([xshift=7pt]table.south east);
\end{tikzpicture}
\caption{(Average Error) Classical capacities of quantum channels with different shared resources and jammers, with \emph{finite-dimensional} jamming systems.
Here, cells of the same shade have been proven to take the same values \emph{under the convexity condition of the channel variability}. 
}
\label{tab:summary:finite:eve}
\end{table}
In this paper, we propose an alternative approach for studying fully quantum arbitrarily varying channels (FQAVCs), which is centered around Sion’s minimax theorem and hence referred to as the {\it minimax approach}.
Specifically, we show that designing a communication protocol (between Alice and Bob) that performs well against all possible jamming states $\sigma_\sys{E}$ is no more difficult than designing one that performs well against the worst-case jamming state.
Similar minimax techniques have been successfully applied in other quantum information tasks requiring {\it universality}, such as quantum channel simulation~\cite{cao2024quantum}.
The minimax approach not only provides additional insights to the problem and offers a cleaner derivation of previously known results, but also provides a potential pathway to lift the limitations imposed by the assumption of finite-dimensional jammer systems (\cf~\cite{boche2018fully, belzig2024fully}).

\smallskip
The remainder of the paper is organized as follows. 
In Section~\ref{sec:preliminary}, we give the definitions of compound channels, arbitrarily varying channels (AVCs), and fully quantum arbitrarily varying channels (FQAVCs), along with their classical capacities. 
In Section~\ref{sec:minimax}, we introduce the minimax approach, which is the core technical innovation of the paper. 
In Section~\ref{sec:ea:comp:fqavc}, based on the minimax approach, we provide a streamlined proof of the previously established single-letter expression of the entanglement-assisted classical capacity of FQAVCs.
In Section~\ref{sec:sr:comp:fqavc}, we demonstrate how such minimax approach is also useful when the shared resources between Alice and Bob is downgraded to shared (classical) randomness, which results in a single-letter expression of the shared-randomness assisted capacities of CQ-FQAVCs with a quantum adversary system, and a multi-letter expression for QQ-FQAVCs.
Finally, in Section~\ref{sec:discussion}, we summarize our findings, discuss how the approaches in the paper generalize to quantum capacities, and outline directions for future works.
\section{Preliminaries}\label{sec:preliminary}
In this section, we introduce the mathematical formulism together with our notations for channels subject to malicious interference.
Channels with external interference are channels with two inputs (one from the sender Alice, one from the jammer Eve) and one output (to the receiver Bob).
In this paper, we will focus on the cases when the systems for Alice and Bob are quantum and finite dimensional.
However, note that the notations straightforwardly generalize to classical channels by restricting the systems for Alice and Bob to classical ones.

We consider the problem of point-to-point (classical) communication with a malicious independent interfering Eve.Namely, Alice is trying to send as much as possible messages to Bob with low probability of error, and Eve is feeding a jamming signal to the channels to make the communication harder or even impossible without access to channel inputs nor outputs.
Depending on the capabilities (or restrictions) of Eve, such a channel is known as
\begin{itemize}
    \item a \emph{compound} channel if Eve is memoryless, \ie, the inputs of Eve to $n$-uses of the channel is described by $n$ independent random variables, $\rv{E}_1$, $\rv{E}_2$, $\ldots$, $\rv{E}_n$.
    In this case, the induced channel from Alice to Bob can be described by the following completely-positive trace preserving map (CPTP)
    \begin{align}
    \mathcal{N}_{\syss{A}_1^n\to\syss{B}_1^n}[\rvs{E}_1^n] \defeq
    \mathcal{N}_{\syss{A}_1^n\rvs{E}_1^n \to \syss{B}_1^n}\left(\cdot\tensor \prod_{\ell=1}^n p_{\rv{E}_\ell}\right) : \rho_{\syss{A}_1^n} &\mapsto \int \prod_{\ell=1}^n p_{\rv{E}_\ell}(e_\ell)\cdot \Tensor_{\ell=1}^n \mathcal{N}_{\sys{A}_\ell\to\sys{B}_\ell}(\cdot\tensor\proj{e_\ell}) \diff{\mathbf{e}_1^n}\\
    &= \Tensor_{\ell=1}^n \underbrace{\int p_{\rv{E}_\ell}(e) \cdot \mathcal{N}_{\sys{A}_\ell\to\sys{B}_\ell}(\cdot\tensor\proj{e}) \diff{e}}_{\mathcal{N}_{\sys{A}_\ell\to\sys{B}_\ell}[\rv{E}_\ell]}
    \end{align}
    where, for each input $e$ from Eve, the channel between Alice and Bob is quantum, and is described by some CPTP $\mathcal{N}_{\sys{AE}\to\sys{B}}(\cdot\tensor\proj{e})$ for each $e$.
    Note that in the case when Eve is memoryless, the inputs of Eve can always be described by $n$ independent random variables even if the input systems associated with Eve is quantum.
    In the latter case, one can describe the input of Eve by a sequence of continuous random variable by replacing each density operator for each channel-use as a distribution on the corresponding state space.
    \item a \emph{arbitrarily varying} channel (AVC) if the jamming signal is classical, \ie, the inputs of Eve to $n$-uses of the channel is described by a joint random variable $\rvs{E}_1^n$.
    In this case, we have similar expressions for the induced channel from Alice and Bob, but without the simplifications into the product forms, \ie,
    \begin{align}
    \mathcal{N}_{\syss{A}_1^n\to\syss{B}_1^n}[\rvs{E}_1^n] \defeq
    \mathcal{N}_{\syss{A}_1^n\rvs{E}_1^n \to \syss{B}_1^n}(\cdot\tensor p_{\rvs{E}_1^n}) : \rho_{\syss{A}_1^n} &\mapsto \int p_{\rvs{E}_1^n}(\mathbf{e}_1^n) \cdot \Tensor_{\ell=1}^n \mathcal{N}_{\sys{A}_\ell\to\sys{B}_\ell}(\cdot\tensor\proj{e_\ell}) \diff{\mathbf{e}_1^n}.
    \end{align}
    Note that, even if the input systems of Eve is quantum, it can still be treated as classical systems with a continuous alphabet if Eve only outputs separable states, \ie, the relationship between Even's inputs at different channel-uses only exhibit classical correlation at most.
    Hence, such a channel is still referred to as an arbitrarily varying channel even if all three systems (the two input systems of Alice and Eve, and the output system for Bob) are quantum.
    \item a \emph{fully quantum arbitrarily varying} channel (FQAVC) if the jamming signal is quantum with no additional restrictions, \ie, the inputs of Eve to $n$-uses of the channel is described by some joint density operator $\sigma_{\syss{E}_1^n}\in\densop(\hilbert_{\syss{E}_1^n})$ where $\densop(\hilbert)$ denotes the set of all density operators on the Hilbert space $\hilbert$.
    In this case, the induced channel from Alice and Bob can be described by the following CPTP
    \begin{equation}
    \mathcal{N}_{\syss{A}_1^n\to\syss{B}_1^n}[\rvs{E}_1^n] \defeq
    \mathcal{N}_{\syss{A}_1^n\syss{E}_1^n \to \syss{B}_1^n}(\cdot\tensor\sigma_{\syss{E}_1^n}) : \rho_{\syss{A}_1^n} \mapsto \left(\Tensor_{\ell=1}^n \mathcal{N}_{\sys{A}_\ell\sys{E}_\ell \to \sys{B}_\ell} \right) (\rho_{\syss{A}_1^n}\tensor \sigma_{\syss{E}_1^n}).
    \end{equation}
\end{itemize}

We are interested in the classical communications over the aforementioned channels  with the help of certain shared resources.
A communication protocol consists of an encoder $\mathcal{E}$ (on the Alice side), a decoder $\mathcal{D}$ (on the Bob side), and some description of the shared resources.
In the case with \emph{entanglement-assistance}\footnote{The notion of shared-randomness-assistance follows similarly by restricting systems $\sys{K}$ and $\sys{K}'$ below to classical systems.}, such a code can be formulated as a triple $(\mathcal{E},\mathcal{D},\proj{\phi}_{\sys{K}'\sys{K}})$ where
\begin{itemize}
    \item The quantum systems $\sys{K}'$ and $\sys{K}$ have the same state space, \ie, $\hilbert_{\sys{K}'} = \hilbert_{\sys{K}}$, and $\ket{\phi}_{\sys{K}'\sys{K}}$ is a pure state on systems $\sys{K}'\sys{K}$ describing the entangled system pair between Alice and Bob, who have access to system $\sys{K}'$ and $\sys{K}$, respectively;
    \item the encoder is a CPTP from systems $\sys{M}\sys{K}'$ to $\syss{A}_1^n$ where $\sys{M}$ is classical with basis $\{\ket{m}\}_{m=1}^{M}$, \ie, $\mathcal{E}\in\fnc{CPTP}_{[M]\tensor \hilbert_{\sys{K}'} \to \syss{A}_1^n}$, and we further denote $\mathcal{E}^{(m)}\equiv \mathcal{E}(\proj{m}\tensor \cdot)$.
    \item the decoder is a joint measurement on systems $\syss{B}_1^n$ and $\sys{K}$ described by some positive operator valued measurement (POVM) $\mathcal{D}=\left\{D^{(m)}_{\syss{B}_1^n\sys{K}}\right\}_{m=1}^M$;
    \item $M$ is a positive integer, again, known as the message size of the code, and $[M]$, again, denotes the set $\{1,2,\ldots,M\}$;
    \item we have again assumed using the channel $n$ times.
\end{itemize}
In the case when Eve's signal is described by $\rvs{E}_1^n$ (being classical or quantum), the conditional probability for Bob to receive $\hat{m}$ given that Alice sent $m$ is 
\begin{equation}
    p_{\hat{\rv{M}}|\rv{M}}[\rvs{E}_1^n](\hat{m}|m) = \bra{\hat{m}} \left[\mathcal{D} \circ \left((\mathcal{N}_{\syss{A}_1^n\to\syss{B}_1^n}[\rvs{E}_1^n] \circ \mathcal{E}) \tensor \id_{\sys{K}}\right)\right](\proj{m}\tensor \proj{\phi}_{\sys{K}'\sys{K}}) ) \ket{\hat{m}}.
\end{equation}
Assuming equal-probable messages, the probability error \emph{with Eve's best effort to jam the communication} is expressed as 
\begin{equation}\label{eq:worst:eps:def}
    \epsilon \defeq \sup_{\rvs{E}_1^n} \left\{ 1-\sum_{m\in[M]} \frac{1}{M} \cdot p_{\hat{\rv{M}}|\rv{M}}[\rvs{E}_1^n](m|m) \right\}.
\end{equation}
In this case, we refer to the code $(\mathcal{E},\mathcal{D},\proj{\phi}_{\sys{K}'\sys{K}})$ as a size-$M$ $\epsilon$-error code for $n$-use of the channel $\mathcal{N}_{\sys{AE}\to\sys{B}}$, or a $(M,\epsilon)$-code for $\mathcal{N}_{\sys{AE}\to\sys{B}}^{\tensor n}$ for short.

In all cases above, our goal is to study the trade-offs between the message size $M$ and the error $\epsilon$ as $n$ (the number of channel-uses) grows larger.
In particular, we are interested in FQAVCs, specifically, how it compares to the cases when if it is treated as a AVC or a compound channel.
In the following, for the sake of compassion, we will treat Eve's system as quantum, even in the cases when Eve is restricted (and thus not effectively quantum), and the channel is hence a AVC or compound.
Considering a FQAVC/AVC/compound channel risen from CPTP $\mathcal{N}_{\sys{AE}\to\sys{B}}$, in all three cases, for an entanglement-assisted code $(\mathcal{E},\mathcal{D},\proj{\phi}_{\sys{K}'\sys{K}})$ over such a channel:
\begin{itemize}
    \item For FQAVC $\mathcal{N}_{\sys{AE}\to\sys{B}}$, the conditional probability for Bob to receive $\hat{m}$ given that Alice sent $m$ given Eve's input $\sigma_{\syss{E}_1^n}$ (using the channel $n$ times) can be expressed as
    \begin{align}\label{eq:fqavc:pmm}
        p_{\hat{\rv{M}}|\rv{M}}[\sigma_{\syss{E}_1^n}](\hat{m}|m) 
        &= \tr\left( D_{\syss{B}_1^n\sys{K}}^{(\hat{m})} \cdot \left[\mathcal{N}_{\sys{AE}\to\sys{B}}^{\tensor n}(\cdot\tensor \sigma_{\syss{E}_1^n})\circ \mathcal{E}_{\sys{K}'\to\syss{A}_1^n}^{(m)}\right](\proj{\phi}_{\sys{K}'\sys{K}})\right).
    \end{align}
    In this case, the average probability error \emph{with Eve's best effort to jam the communication} can be written as
    \begin{equation}\label{eq:worst:eps:fqavc}
    \epsilon \defeq \sup_{\sigma_{\syss{E}_1^n}\in\densop(\hilbert_{\syss{E}_1^n})} \left\{ 1-\sum_{m\in[M]} \frac{1}{M} \cdot p_{\hat{\rv{M}}|\rv{M}}[\sigma_{\syss{E}_1^n}](m|m) \right\}.
    \end{equation}
    \item For AVC $\mathcal{N}_{\sys{AE}\to\sys{B}}$, Eve's input is restricted to some separable states, for example $\sigma_{\syss{E}_1^n} = \int p(\mathbf{e}_1^n)\cdot \tensor_{j=1}^n \sigma_{\sys{E}_j}^{(e_j)} \diff{\mathbf{e}_1^n}$.
    The conditional probability for Bob to receive $\hat{m}$ given that Alice sent $m$ given E (using the channel $n$ times) is expressed in the same way as~\eqref{eq:fqavc:pmm}.
    The average probability error \emph{with Eve's best effort to jam the communication} can be written as
    \begin{equation}\label{eq:worst:eps:avc}
    \epsilon \defeq \sup_{\sigma_{\syss{E}_1^n}\in\densop(\hilbert_{\syss{E}_1^n}): \text{ separable}} \left\{ 1-\sum_{m\in[M]} \frac{1}{M} \cdot p_{\hat{\rv{M}}|\rv{M}}[\sigma_{\syss{E}_1^n}](m|m) \right\}.
    \end{equation}
    \item For compound $\mathcal{N}_{\sys{AE}\to\sys{B}}$, Eve's input is further restricted to product states, \ie, $\Tensor_{\ell=1}^n \sigma_{\sys{E}_\ell}$.
    The conditional probability for Bob to receive $\hat{m}$ given that Alice sent $m$ given Eve's input  (using the channel $n$ times) can be rewritten from~\eqref{eq:fqavc:pmm} as
    \begin{align}\label{eq:comp:pmm}
        p_{\hat{\rv{M}}|\rv{M}}\left[\Tensor_{\ell=1}^n \sigma_{\sys{E}_\ell}\right](\hat{m}|m) 
        &= \tr\left( D_{\syss{B}_1^n\sys{K}}^{(\hat{m})} \cdot \left[\left(\Tensor_{\ell=1}^n \mathcal{N}_{\sys{AE}\to\sys{B}}(\cdot\tensor \sigma_{\sys{E}_\ell})\right) \circ \mathcal{E}_{\sys{K}'\to\syss{A}_1^n}^{(m)}\right](\proj{\phi}_{\sys{K}'\sys{K}})\right)
    \end{align}
    The average probability error \emph{with Eve's best effort to jam the communication} can be expressed as
    \begin{equation}\label{eq:worst:eps:comp}
    \epsilon \defeq \sup_{\sigma_{\sys{E}_1},\cdots, \sigma_{\sys{E}_n}} \left\{ 1-\sum_{m\in[M]} \frac{1}{M} \cdot p_{\hat{\rv{M}}|\rv{M}}\left[\Tensor_{\ell=1}^n \sigma_{\sys{E}_\ell}\right](m|m) \right\}
    =\sup_{\sigma_\sys{E}} \left\{ 1-\sum_{m\in[M]} \frac{1}{M} \cdot p_{\hat{\rv{M}}|\rv{M}}\left[\sigma_\sys{E}^{\tensor n}\right](m|m) \right\}.
    \end{equation}
\end{itemize}
The entanglement-assisted (classical) capacity of the above channels are defined as
\begin{equation}
    C_\fnc{type}^\fnc{EA}(\mathcal{N}_{\sys{AE}\to\sys{B}}) \defeq \sup\left\{ r\geq 0 \middle\vert
    \parbox{9cm}{$\exists $ a sequence of entanglement-assisted $(\ceil{2^{nr}}, \epsilon_n)$ codes $(\mathcal{E},\mathcal{D},\proj{\phi}_{\sys{K}'\sys{K}})$ for $\mathcal{N}^{\tensor n}_{\sys{AE}\to\sys{B}}$ with $\epsilon_n\to 0$}
    \right\}.
\end{equation}
where when $\fnc{type}=\fnc{fqavc}$ or $\fnc{avc}$ or $\fnc{comp}$, $\epsilon_n$ is given by~\eqref{eq:worst:eps:fqavc} or~\eqref{eq:worst:eps:avc} or~\eqref{eq:worst:eps:comp}, respectively.

It is immediate that 
\begin{equation}
C_\fnc{comp}^\fnc{EA}(\mathcal{N}_{\sys{AE}\to\sys{B}})
\geq C_\fnc{avc}^\fnc{EA}(\mathcal{N}_{\sys{AE}\to\sys{B}})
\geq C_\fnc{fqavc}^\fnc{EA}(\mathcal{N}_{\sys{AE}\to\sys{B}}),
\end{equation}
which is also intuitive since we are lift restrictions on the adversary from left to right.

\section{The minimax approach}
\label{sec:minimax}
In this section, we demonstrate how Sion's minimax theorem can be used for analyzing the channel coding problems with an adversary quantum side input. In particular, we will show that requiring a code to perform well for \emph{all} input states $\sigma_\sys{E}$ is no more demanding than requiring it to perform well for the \emph{worst-case} input. To facilitate our discussion, we introduce the following notations and definitions first.
\paragraph{CPTP on classical-quantum mixed systems.}
Let $\rv{X}, \rv{Y}$ be a pair of classical systems with alphabets (not necessarily finite or even discrete, but measurable) $\set{X}$ and $\set{Y}$, respectively; and supposed they are embedded into quantum systems with state spaces $\hilbert_\sys{X}=\Span(\{\ket{x}:x\in\set{X}\})$ and $\hilbert_\sys{Y}=\Span(\{\ket{y}:y\in\set{Y}\})$, respectively.
Let $\sys{A}, \sys{B}$ be a pair of quantum systems with state spaces $\hilbert_\sys{A}$ and $\hilbert_\sys{B}$, respectively.
A CPTP $\mathcal{N}$ from $\rv{X}\sys{A}$ to $\rv{Y}\sys{B}$ is a CPTP from $\hilbert_\sys{X}\tensor\hilbert_\sys{A}$ to $\hilbert_\sys{Y}\tensor\hilbert_\sys{B}$ that always maps one classical-quantum state to another, \ie, 
for any pdf $p$ on $\set{X}$ and indexed set of density operators $\{\rho_\sys{A}^{(x)}\}_{x\in\set{X}}$, 
\begin{equation}
    \mathcal{N}\left(\int \diff{x}\ p(x)\proj{x}\tensor\rho_\sys{A}^{(x)}\right)
        = \int \diff{y}\  q(y)\proj{y}\tensor\sigma_\sys{B}^{(y)}
\end{equation}
for some pdf $q$ on $\set{Y}$ and indexed set of density operators $\{\sigma_\sys{B}^{(y)}\}_{y\in\set{Y}}$.
In the following, we denote the set of such CPTPs as $\fnc{CPTP}_{\sys{XA}\to\sys{YB}}$ and $\fnc{CPTP}_{\set{X}\tensor\hilbert_\sys{A}\to\set{Y}\tensor\hilbert_\sys{B}}$ interchangeably.
\paragraph{One-shot largest coding size within given average error.}
For a given quantum channel $\mathcal{N}_{\sys{A}\to\sys{B}}$, we denote the largest integer $M$ such that there exists a size-$M$ non-assisted / shared randomness-assisted / entanglement-assisted code  $\mathcal{N}_{\sys{A}\to\sys{B}}$ that achieves average error at most $\epsilon$ as
\[
    M_\epsilon(\mathcal{N}_{\sys{A}\to\sys{B}}),\, M_\epsilon^\fnc{SR}(\mathcal{N}_{\sys{A}\to\sys{B}}),\, M_\epsilon^\fnc{EA}(\mathcal{N}_{\sys{A}\to\sys{B}}), 
\]
respectively.
For a given quantum channel $\mathcal{N}_{\sys{AE}\to\sys{B}}$ with adversary input $\sys{E}$, we denote the largest integer $M$ such that there exists a size-$M$ non-assisted/shared randomness-assisted/entanglement-assisted code for $\mathcal{N}_{\sys{AE}\to\sys{B}}(\cdot\tensor\sigma_\sys{E})$ that achieves average error at most $\epsilon$ \emph{for all} input states $\sigma_\sys{E}$ simultaneously as
\[
    M_\epsilon^{\forall \sys{E}}(\mathcal{N}_{\sys{A}\to\sys{B}}),\, M_\epsilon^{\fnc{SR}, \forall \sys{E}}(\mathcal{N}_{\sys{A}\to\sys{B}}),\, M_\epsilon^{\fnc{EA}, \forall \sys{E}}(\mathcal{N}_{\sys{A}\to\sys{B}}), 
\]
respectively.

Using the above notation, and referring to Section~\ref{sec:preliminary}, the largest code size for an $n$-fold FQAVC can be compactly expressed as
\begin{equation}
M^{\forall\syss{E}_1^n}_{\epsilon}(\mathcal{N}^{\tensor n}_{\sys{AE}\to\sys{B}}), \text{ or }
M^{\fnc{SR}, \forall\syss{E}_1^n}_{\epsilon}(\mathcal{N}^{\tensor n}_{\sys{AE}\to\sys{B}}), \text{ or }
M^{\fnc{EA}, \forall\syss{E}_1^n}_{\epsilon}(\mathcal{N}^{\tensor n}_{\sys{AE}\to\sys{B}}),
\end{equation}
depending on the shared resources, where this simplification does not hold in general for the compound or AVC cases\footnote{In the one-shot setting, the compound, arbitrarily varying, and fully quantum arbitrarily varying channels derived from the same CPTP map $\mathcal{N}_{\sys{AE}\to\sys{B}}$ are equivalent.
Differences only arise when considering multiple uses of the channel.}.
Via the definition, we immediately have the upper bound
\begin{equation}\label{eq:M:fqavc:inf:eac}
M^{*, \forall\sys{E}}_{\epsilon}(\mathcal{N}_{\sys{AE}\to\sys{B}}) \leq \inf_{\sigma_\sys{E}} M^{*}_\epsilon(\mathcal{N}_{\sys{AE}\to\sys{B}}(\cdot\tensor\sigma_\sys{E})),
\end{equation}
where $*\in\{\text{`'}, \text{`SR'}, \text{`EA'}\}$.

In the following, as promised at the beginning of the section, we will show that the inequality in~\eqref{eq:M:fqavc:inf:eac} is actually tight.
This holds because the set of shared randomness-assisted/entanglement-assisted codes of a fixed size is convex, and the average error is a linear function of both the code and $\sigma_\sys{E}$ (though not jointly linear).
This structural observation lies at the heart of our proof below.
\begin{theorem}\label{thm:minimax}
Given a quantum channel $\mathcal{N}_{\sys{AE}\to\sys{B}}$ with adversary system $\sys{E}$ where all involved systems are finite-dimensional, it holds that
\begin{align}
\label{eq:minimax:1}
M^{\fnc{SR}, \forall\sys{E}}_{\epsilon}(\mathcal{N}_{\sys{AE}\to\sys{B}}) &= \inf_{\sigma_\sys{E}} M^\fnc{SR}_\epsilon(\mathcal{N}_{\sys{AE}\to\sys{B}}(\cdot\tensor\sigma_\sys{E})),\\
\label{eq:minimax:2}
M^{\fnc{EA}, \forall\sys{E}}_{\epsilon}(\mathcal{N}_{\sys{AE}\to\sys{B}}) &= \inf_{\sigma_\sys{E}} M^\fnc{EA}_\epsilon(\mathcal{N}_{\sys{AE}\to\sys{B}}(\cdot\tensor\sigma_\sys{E}))
\end{align}
for all $\epsilon\in(0,1)$.
\end{theorem}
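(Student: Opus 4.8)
The reverse inequality ``$\le$'' in both~\eqref{eq:minimax:1} and~\eqref{eq:minimax:2} is already~\eqref{eq:M:fqavc:inf:eac}, so I only need ``$\ge$'', and it suffices to treat the one-shot quantities (the $n$-fold statements then follow by applying the identity to $\mathcal N^{\tensor n}_{\sys{AE}\to\sys B}$ with $\syss A_1^n,\syss B_1^n,\syss E_1^n$ regarded as single systems). Fix $\epsilon\in(0,1)$, fix $*\in\{\fnc{SR},\fnc{EA}\}$, and set $M\defeq\inf_{\sigma_\sys E}M^{*}_\epsilon(\mathcal N_{\sys{AE}\to\sys B}(\cdot\tensor\sigma_\sys E))$; we may assume $M<\infty$, otherwise we run the argument below for every finite message size. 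By definition of the infimum, $M^{*}_\epsilon(\mathcal N_{\sys{AE}\to\sys B}(\cdot\tensor\sigma_\sys E))\ge M$ for every $\sigma_\sys E\in\densop(\hilbert_\sys E)$, i.e.\ for each such $\sigma_\sys E$ there is a size-$M$ code of the appropriate type with average error at most $\epsilon$ against $\sigma_\sys E$.

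Now I set up the minimax problem. Let $\mathcal C$ be the set of size-$M$ codes of the given type, and for $c\in\mathcal C$ and $\sigma_\sys E\in\densop(\hilbert_\sys E)$ let $f(c,\sigma_\sys E)\defeq\sum_{m\in[M]}\tfrac1M\,p_{\hat{\rv M}|\rv M}[\sigma_\sys E](m|m)$ be the average success probability. Three structural facts drive the proof. First, for fixed $c$ the map $\sigma_\sys E\mapsto f(c,\sigma_\sys E)$ is affine and trace-norm continuous; indeed $f(c,\sigma_\sys E)=\sum_{m}\tfrac1M\tr(Q^{(m)}_\sys E\sigma_\sys E)$ for operators $0\le Q^{(m)}_\sys E\le I_\sys E$ assembled from $c$ and $\mathcal N_{\sys{AE}\to\sys B}$ (more precisely $p_{\hat{\rv M}|\rv M}[\cdot](\hat m|m)=\tr(Q^{(m,\hat m)}_\sys E\,\cdot\,)$ with $\{Q^{(m,\hat m)}_\sys E\}_{\hat m}$ an $M$-outcome POVM for each $m$). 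Second, $\mathcal C$ carries a convex structure: a convex combination of shared-randomness codes is again a shared-randomness code, and a convex combination of entanglement-assisted codes is again an entanglement-assisted code once the classical mixing register is absorbed into a larger \emph{pure} shared state. Third, for fixed $\sigma_\sys E$ the map $c\mapsto f(c,\sigma_\sys E)$ is affine for this structure. Hence $\sup_c\inf_{\sigma_\sys E}f\le\inf_{\sigma_\sys E}\sup_c f$ holds trivially, and the theorem is exactly the reverse inequality, which I will obtain from Sion's minimax theorem.

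To apply Sion I need one side to be compact, and since $\densop(\hilbert_\sys E)$ is \emph{not} compact when $\hilbert_\sys E$ is infinite-dimensional, compactness must come from $\mathcal C$. In the shared-randomness case this is immediate: a size-$M$ shared-randomness code is a probability measure on the compact (finite-dimensional) set of deterministic codes — a tuple of $M$ density operators on $\hilbert_\sys A$ together with an $M$-outcome POVM on $\hilbert_\sys B$ — and these measures form a weak-$*$ compact set on which $f(\cdot,\sigma_\sys E)$ is continuous. In the entanglement-assisted case I would identify a code with the tuple $(\{Q^{(m,\hat m)}_\sys E\}_{\hat m})_{m\in[M]}$ of POVMs on $\hilbert_\sys E$ it induces; the set of all such POVM tuples is compact in the weak operator topology (each $Q^{(m,\hat m)}_\sys E$ lies in the operator interval $[0,I_\sys E]$, and $\sum_{\hat m}Q^{(m,\hat m)}_\sys E=I_\sys E$ is weak-operator closed), and $f$ depends weak-operator continuously on the tuple for fixed trace-class $\sigma_\sys E$; one then checks that the \emph{achievable} tuples form a weak-operator-closed (hence compact) convex subset — equivalently, that the effective dimension of the shared entanglement needed to realize a given behaviour can be bounded in terms of $\mathcal N_{\sys{AE}\to\sys B}$ and $M$ alone. \textbf{This compactness/entanglement-dimension-reduction step is the main obstacle:} it is precisely what replaces the de~Finetti reductions of~\cite{boche2018fully,belzig2024fully}, and it must be carried out so as not to covertly reinstate a finite-dimensional-jammer hypothesis.

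Granting compactness of $\mathcal C$, Sion's minimax theorem yields
\begin{equation*}
\max_{c\in\mathcal C}\ \inf_{\sigma_\sys E\in\densop(\hilbert_\sys E)}f(c,\sigma_\sys E)\;=\;\inf_{\sigma_\sys E\in\densop(\hilbert_\sys E)}\ \max_{c\in\mathcal C}f(c,\sigma_\sys E).
\end{equation*}
By the first paragraph, $\max_c f(c,\sigma_\sys E)\ge 1-\epsilon$ for each fixed $\sigma_\sys E$, and since any convex combination of adversary states is again an adversary state this persists over the whole convex set $\densop(\hilbert_\sys E)$; hence the right-hand side is $\ge 1-\epsilon$. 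Therefore the left-hand maximum is attained at some $c^\star\in\mathcal C$ with $\inf_{\sigma_\sys E}f(c^\star,\sigma_\sys E)\ge 1-\epsilon$, i.e.\ $c^\star$ is a size-$M$ code achieving average error at most $\epsilon$ against \emph{every} $\sigma_\sys E$ simultaneously (in the entanglement-assisted case, re-express the possibly-mixed $c^\star$ as an honest pure-state entanglement-assisted code via a larger shared entangled register, as in the second paragraph). Thus $M^{*,\forall\sys E}_\epsilon(\mathcal N_{\sys{AE}\to\sys B})\ge M$, which together with~\eqref{eq:M:fqavc:inf:eac} gives the claimed equalities.
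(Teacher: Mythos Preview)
Your architecture --- bilinear average error, convex code sets, Sion --- is exactly the paper's. The substantive divergence is in \emph{where you locate compactness} and \emph{how you parametrize the codes}, and this is what manufactures your ``main obstacle''.

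The paper does not seek compactness on the code side at all. It first parametrizes a code of either type as a single joint input--output map $\zeta\in\fnc{CPTP}_{[M]\tensor\hilbert_\sys{B}\to\hilbert_\sys{A}\tensor[M]}$ (with $\set{C}_M^{\fnc{SR}}$ and $\set{C}_M^{\fnc{EA}}$ the corresponding convex subsets), writes the average error as the bilinear $\epsilon(\zeta,\sigma_\sys{E})$ of~\eqref{eq:minimax:error}, and then applies Sion with compactness on the \emph{jammer} side: $\densop(\hilbert_\sys{E})$ is declared weak*-compact and $\epsilon(\zeta,\cdot)$ weak*-continuous. With that choice the SR and EA cases are handled by literally the same sentence --- there is no entanglement-dimension bound, no closedness-of-quantum-strategies issue, nothing. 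The conversion from the minimax on $\epsilon$ to the equality of the $M$-quantities is then done exactly as in your last paragraph.

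Your premise that ``$\densop(\hilbert_\sys{E})$ is not compact, so compactness must come from $\mathcal C$'' is therefore at odds with the paper's route. Even granting your premise, though, your detour through induced POVM tuples on $\hilbert_\sys{E}$ and a hoped-for weak-operator closedness step is unnecessary. The paper's joint input--output parametrization already places both $\set{C}_M^{\fnc{SR}}$ and $\set{C}_M^{\fnc{EA}}$ inside the \emph{finite-dimensional} set $\fnc{CPTP}_{[M]\tensor\hilbert_\sys{B}\to\hilbert_\sys{A}\tensor[M]}$ (recall $\sys{A}$ and $\sys{B}$ are assumed finite-dimensional throughout). Once a code is summarized by $\zeta$, the dimension of the shared entanglement becomes invisible; compactness on the code side, if you want it, is available for free in that ambient space, without any Tsirelson-type closedness or dimension-reduction argument. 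That is the parametrization you are missing, and it is what makes your self-identified obstacle disappear.
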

\paragraph{Codes as joint-input-output maps.}
Before presenting the proof of Theorem~\ref{thm:minimax}, we first introduce the set of all codes available to Alice and Bob under different shared resource assumptions, formulated as joint input-output CPTP maps.
This notation arises naturally when describing non-signaling-assisted codes and will prove useful for our subsequent discussion.
Specifically, we define the sets of all $M$-message non-assisted, shared-randomness-assisted, and entanglement-assisted codes, respectively, as
\begin{align}
\set{C}_{M} &\defeq \left\{
    \begin{aligned}
    &\zeta_{\rv{M}\sys{B}\to\sys{A}\hat{\rv{M}}}\in \\
    &\fnc{CPTP}_{[1:M]\tensor\hilbert_\sys{B}\to\hilbert_\sys{A}\tensor[1:M]}
    \end{aligned}
    \middle\vert
    \begin{aligned}
    &\exists\ \begin{aligned}[t]
    &\mathcal{E}_{\rv{M}\to\sys{A}}\in \fnc{CPTP}_{[1:M] \to\hilbert_\sys{A}}\\
    &\mathcal{D}_{\sys{B}\to\hat{\rv{M}}}\in \fnc{CPTP}_{\hilbert_\sys{B} \to[1:M]}
    \end{aligned}\\
    &\text{s.t. }
    \zeta = \mathcal{E}_{\rv{M}\to\sys{A}}\tensor \mathcal{D}_{\sys{B}\to\hat{\rv{M}}}
    \end{aligned}
    \right\}\\
\set{C}_{M}^{\fnc{SR}} &\defeq \left\{
    \begin{aligned}
    &\zeta_{\rv{M}\sys{B}\to\sys{A}\hat{\rv{M}}}\in \\
    &\fnc{CPTP}_{[1:M]\tensor\hilbert_\sys{B}\to\hilbert_\sys{A}\tensor[1:M]}
    \end{aligned}
    \middle\vert
    \begin{aligned}
    &\exists\ \begin{aligned}[t]
    &\text{some random variable } \rv{S} \\
    &\mathcal{E}_{\rv{MS}\to\sys{A}}\in \fnc{CPTP}_{[1:M]\tensor \set{S} \to\hilbert_\sys{A}}\\
    &\mathcal{D}_{\sys{B}\rv{S}\to\hat{\rv{M}}}\in \fnc{CPTP}_{\hilbert_\sys{B}\tensor \set{S} \to[1:M]}
    \end{aligned}\\
    &\text{s.t. }
    \zeta = \int\diff{s}\ p_\rv{S}(s) \cdot \mathcal{E}_{\rv{MS}\to\sys{A}}(\cdot\tensor \proj{s}) \tensor \mathcal{D}_{\sys{B}\rv{S}\to\hat{\rv{M}}}(\cdot\tensor \proj{s})
    \end{aligned}
    \right\}\\
\set{C}_{M}^{\fnc{EA}} &\defeq \left\{
    \begin{aligned}
    &\zeta_{\rv{M}\sys{B}\to\sys{A}\hat{\rv{M}}}\in \\
    &\fnc{CPTP}_{[1:M]\tensor\hilbert_\sys{B}\to\hilbert_\sys{A}\tensor[1:M]}
    \end{aligned}
    \middle\vert
    \begin{aligned}
    &\exists\ \begin{aligned}[t]
    &\begin{aligned}[t]\text{some quantum systems } \sys{K} \text{ and }\sys{K}' \text{ with state spaces }\\
    \hilbert_\sys{K}\!=\!\hilbert_{\sys{K'}}\text{, and some pure state } \ket{\psi}_{\sys{K}'\sys{K}}\in\hilbert_{\sys{K}'}\!\tensor\!\hilbert_{\sys{K}}\end{aligned}\\
    &\mathcal{E}_{\rv{M}\sys{K}'\to\sys{A}}\in \fnc{CPTP}_{[1:M]\tensor \hilbert_{\sys{K}'} \to\hilbert_\sys{A}}\\
    &\mathcal{D}_{\sys{BK}\to\hat{\rv{M}}}\in \fnc{CPTP}_{\hilbert_\sys{B}\tensor \hilbert_\sys{K} \to[1:M]}
    \end{aligned}\\
    &\text{s.t. }
    \zeta = \left(\mathcal{E}_{\rv{M}\sys{K}'\to\sys{A}} \tensor \mathcal{D}_{\sys{BK}\to\hat{\rv{M}}} \right)(\cdot \tensor \proj{\psi}_{\sys{K}'\sys{K}})
    \end{aligned}
    \right\}
\end{align}
Observe that $\set{C}_{M}^{\fnc{SR}}$ (and $\set{C}_{M}^{\fnc{EA}}$) are convex, which is the case since one can always append an additional bit (pair of entangled qubits) to the shared randomness (the systems $\sys{K}'\sys{K}$), and have Alice and Bob to coordinately choose between two codes randomly. 
\begin{proof}[Proof of Theorem~\ref{thm:minimax}]
Consider a quantum channel $\mathcal{N}_{\sys{AE}\to\sys{B}}$ with adversary system $\sys{E}$.
Let $M$ be a positive integer.
For each shared-randomness assisted code $\zeta\in\set{C}_M^{\fnc{SR}}$, its average error when the jamming state is $\sigma_\sys{E}$ can be expressed as
\begin{equation}\label{eq:minimax:error}
\epsilon(\zeta, \sigma_\sys{E}) = 1- \sum_{m=1}^M \frac{1}{M} \cdot \bra{m} \zeta\circ\mathcal{N}(\proj{m}\tensor \sigma_\sys{E}) \ket{m}
\end{equation}
which is a bilinear function of $\zeta$ and $\sigma_\sys{E}$.
Hence, the minimal worst-case (\wrt Eve's jamming) average error (given the message-size constraint $M$) can be written as 
\begin{equation}\label{eq:eps:M:fqavc:1}
\epsilon^{\fnc{SR},\forall\sys{E}}_{M}(\mathcal{N}_{\sys{AE}\to\sys{B}}) =  
\adjustlimits\inf_{\zeta\in\set{C}_M^\fnc{SR}} \sup_{\sigma_\sys{E}\in\densop(\hilbert_\sys{E})} \epsilon(\zeta, \sigma_\sys{E}).
\end{equation}
Since the function $\epsilon$ is bilinear, both sets $\set{C}_M^\fnc{SR}$ and $\densop(\hilbert_\sys{E})$ are convex, and that $\densop(\hilbert_\sys{E})$ is compact, 
we argue that the Sion's minimax theorem applies for~\eqref{eq:eps:M:fqavc:1}, and thus,
\begin{align}\label{eq:minimax:3}
\epsilon^{\fnc{SR},\forall\sys{E}}_{M}(\mathcal{N}_{\sys{AE}\to\sys{B}})
&= \adjustlimits\sup_{\sigma_\sys{E}\in\densop(\hilbert_\sys{E})}\inf_{\zeta\in\set{C}_M^\fnc{SR}} \epsilon(\zeta, \sigma_\sys{E})
= \sup_{\sigma_\sys{E}\in\densop(\hilbert_\sys{E})} \epsilon_M^\fnc{SR}(\mathcal{N}_{\sys{AE}\to\sys{B}}(\cdot\tensor\sigma_\sys{E}))
\end{align}
where $\epsilon_M^\fnc{SR}(\mathcal{N})$ denotes the minimal average error of shared-randomness-assisted $M$-alphabet classical communication over a quantum channel $\mathcal{N}$.

To prove~\eqref{eq:minimax:1}, we have 
\begin{align}
M^{\fnc{SR}, \forall\sys{E}}_{\epsilon}(\mathcal{N}_{\sys{AE}\to\sys{B}})
&= \max\left\{m\in\mathbb{N}: \epsilon^{\fnc{SR},\forall\sys{E}}_{m}(\mathcal{N}_{\sys{AE}\to\sys{B}}) \leq \epsilon \right\}\\
&= \max\left\{m\in\mathbb{N}: \epsilon_M^\fnc{SR}(\mathcal{N}_{\sys{AE}\to\sys{B}}(\cdot\tensor\sigma_\sys{E})) \leq \epsilon \quad \forall \sigma_\sys{E}\in\densop(\hilbert_\sys{E})\right\}\\
&= \inf_{\sigma_\sys{E}\in\densop(\hilbert_\sys{E})} \max\left\{m\in\mathbb{N}: \epsilon_M^\fnc{SR}(\mathcal{N}_{\sys{AE}\to\sys{B}}(\cdot\tensor\sigma_\sys{E})) \leq \epsilon\right\}\\
&= \inf_{\sigma_\sys{E}\in\densop(\hilbert_\sys{E})} M^\fnc{SR}_\epsilon(\mathcal{N}_{\sys{AE}\to\sys{B}}(\cdot\tensor\sigma_\sys{E})).
\end{align}
Finally,~\eqref{eq:minimax:2} follows from the exactly same argument as the above by replacing the set $\set{C}_M^\fnc{SR}$ with $\set{C}_M^\fnc{EA}$.
\end{proof}

\section{Entanglement-assisted classical capacity of FQAVCs}
\label{sec:ea:comp:fqavc}
In this section, we prove 
\begin{equation}\label{eq:ea:comp:fqavc}
C_\fnc{fqavc}^\fnc{EA}(\mathcal{N}_{\sys{AE}\to\sys{B}}) = \underbrace{\adjustlimits\sup_{\rho_{\sys{A}'}\in\densop(\hilbert_{\sys{A}'})} \inf_{\sigma_\sys{E}\in\densop(\hilbert_\sys{E})} I(\sys{A}':\sys{B})_{\left(\id_{\sys{A}'}\tensor \mathcal{N}_{\sys{AE}\to\sys{B}}\right)(\proj{\rho}_{\sys{A}'\sys{A}}\tensor \sigma_{\sys{E}}) }}_{\defas I_{\fnc{adv.}\sys{E}}(\mathcal{N}_{\sys{AE}\to\sys{B}})}
\end{equation}
\ie, $C_\fnc{comp}^\fnc{EA}(\mathcal{N}_{\sys{AE}\to\sys{B}}) = C_\fnc{fqavc}^\fnc{EA}(\mathcal{N}_{\sys{AE}\to\sys{B}})$. 
We split the proof into the following converse bound and achievability bound.
\subsection{Converse bound}
\label{sec:ea:comp:fqavc:converse}
We show the the following (strong) converse bound first, which is relatively straightforward.
\begin{proposition}\label{prop:converse:ea:fqavc}
Given a quantum channel $\mathcal{N}_{\sys{AE}\to\sys{B}}$ with an adversary system $E$, where all involved systems are finite dimensional, 
it holds that
\begin{equation}
\limsup_{n\to\infty} \frac{1}{n}\log{M^{\fnc{EA}, \forall\syss{E}_1^n}_{\epsilon}(\mathcal{N}^{\tensor n}_{\sys{AE}\to\sys{B}})} \leq I_{\fnc{adv.}\sys{E}}(\mathcal{N}_{\sys{AE}\to\sys{B}})
\end{equation}
for all $\epsilon\in(0,1)$.
\end{proposition}
\begin{proof}
Apply~\eqref{eq:minimax:2} on $n$ copies of the channel $\mathcal{N}_{\sys{AE}\to\sys{B}}$, and consider an upper bound by restricting the jammer state $\sigma_{\syss{E}_1^n}$ to be $\sigma_\sys{E}^{\tensor n}$, we have
\begin{equation}
\frac{1}{n} \log{M_\epsilon^{\fnc{EA},\forall\syss{E}_1^m}(\mathcal{N}_{\sys{AE}\to\sys{B}}^{\tensor n})}
\leq \inf_{\sigma_\sys{E}} \frac{1}{n} \log{M_\epsilon^{\fnc{EA}}\left(\left[\mathcal{N}_{\sys{AE}\to\sys{B}}(\cdot\tensor\sigma_\sys{E})\right]^{\tensor n}\right)}.
\end{equation}
For each fixed $\sigma_\sys{E}$, the strong converse theorem of the entanglement-assisted capacity~\cite{gupta2015multiplicativity} dictates that 
\begin{equation}
\lim_{n\to\infty} \frac{1}{n} \log{M_\epsilon^{\fnc{EA}}\left(\left[\mathcal{N}_{\sys{AE}\to\sys{B}}(\cdot\tensor\sigma_\sys{E})\right]^{\tensor n}\right)}
= \sup_{\rho_\sys{A}} I(\sys{A}':\sys{B})_{\id_{\sys{A}'}\tensor\mathcal{N}_{\sys{AE}\to\sys{B}}(\proj{\rho}_{\sys{A}'\sys{A}}\tensor\sigma_\sys{E})}
\end{equation}
for all $\epsilon\in(0,1)$.
Hence, 
\begin{equation}
\limsup_{n\to\infty} \frac{1}{n} \log{M_\epsilon^{\fnc{EA},\forall\syss{E}_1^m}(\mathcal{N}_{\sys{AE}\to\sys{B}}^{\tensor n})}
\leq \adjustlimits\inf_{\sigma_\sys{E}} \sup_{\rho_\sys{A}} I(\sys{A}':\sys{B})_{\id_{\sys{A}'}\tensor\mathcal{N}_{\sys{AE}\to\sys{B}}(\proj{\rho}_{\sys{A}'\sys{A}}\tensor\sigma_\sys{E})}
= I_{\fnc{adv.}\sys{E}}(\mathcal{N}_{\sys{AE}\to\sys{B}})
\end{equation}
for all $\epsilon\in(0,1)$.
\end{proof}
\subsection{Achievability bound}
\label{sec:ea:comp:fqavc:achievability}
We now proceed to the achievability proof. 
The major hurdle is the lack of additivity of one-shot converse bounds on channel coding with respect to $\sigma_{\syss{E}_1^n}$. 
In our approach, we consider the meta-converse of the entanglement assisted classical capacities expressed in terms of the hypothesis testing divergence, and apply the generalized AEP as in~\cite{fang2024generalized} for the asymptotic analysis.
\begin{proposition}\label{prop:achievability:ea:fqavc}
Given a quantum channel $\mathcal{N}_{\sys{AE}\to\sys{B}}$ with an adversary system $E$, where all involved systems are finite dimensional, 
it holds that
\begin{equation}
\liminf_{n\to\infty} \frac{1}{n}\log{M^{\fnc{EA}, \forall\syss{E}_1^n}_{\eta}(\mathcal{N}^{\tensor n}_{\sys{AE}\to\sys{B}})} \geq I_{\fnc{adv.}\sys{E}}(\mathcal{N}_{\sys{AE}\to\sys{B}})
\end{equation}
for all $\eta\in(0,1)$
\end{proposition}
We need the following theorem from~\cite{fang2024generalized} for our proof of Proposition~\ref{prop:achievability:ea:fqavc}.
\begin{lemma}[Generalized AEP~{\cite[Theorem~26]{fang2024generalized}}] \label{lem:generalized:AEP}
Let $\{\set{A}_n\}_{n\in\mathbb{N}}$ and $\{\set{B}_n\}_{n\in\mathbb{N}}$ be two sequence of sets such that
\begin{enumerate}
\item $\set{A}_n\subset \densop(\hilbert^{\tensor n})$, $\set{B}_n\subset \posop(\hilbert^{\tensor n})$ for each $n\in\mathbb{N}$.
\item $\set{A}_n$ and $\set{B}_n$ are convex and compact for each $n\in\mathbb{N}$.
\item $\set{A}_n$ and $\set{B}_n$ are permutation invariant for each $n\in\mathbb{N}$.
\item $\set{A}_m\tensor\set{A}_k\subset\set{A}_{m+k}$ and $\set{B}_m\tensor\set{B}_k\subset\set{B}_{m+k}$ for each $m,k\in\mathbb{N}$.
\item $(\set{A}_m)_+^\circ \tensor(\set{A}_k)_+^\circ \subset (\set{A}_{m+k})_+^\circ$ and $(\set{B}_m)_+^\circ \tensor (\set{B}_k)_+^\circ \subset (\set{B}_{m+k})_+^\circ$ for each $m,k\in\mathbb{N}$, where for any convex subset $\set{C}\in\linop(\hilbert)$
    \begin{equation}
    (\set{C})_+^\circ \defeq \{X\in\posop(\hilbert) \vert \tr(XY)\leq 1 \ \forall Y\in\set{C} \}.
    \end{equation}
\item $D_{\max}\infdiv*{\set{A}_n}{\set{B}_n}\defeq \inf_{\rho\in\set{A}_n,\, \sigma\in\set{B}_n} D_{\max}\infdiv*{\rho}{\sigma}$ grows at most linearly \wrt $n$.
\end{enumerate}
Then
\begin{equation}
    \lim_{n\to\infty} \frac{1}{n} \inf_{\rho\in\set{A}_n,\, \sigma\in\set{B}_n} D_h^\epsilon\infdiv*{\rho}{\sigma} = 
    \lim_{n\to\infty} \frac{1}{n} \inf_{\rho\in\set{A}_n,\, \sigma\in\set{B}_n} D_{\max}^\epsilon\infdiv*{\rho}{\sigma} =
    \lim_{n\to\infty} \frac{1}{n} \inf_{\rho\in\set{A}_n,\, \sigma\in\set{B}_n} D\infdiv*{\rho}{\sigma}
\end{equation}
for any $\epsilon\in(0,1)$.
\end{lemma}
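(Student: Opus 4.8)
The plan is to sandwich the two smoothed rates between the regularised relative-entropy rate. Write $e_n^\epsilon\defeq\inf_{\rho\in\set{A}_n,\sigma\in\set{B}_n}D_h^\epsilon\infdiv*{\rho}{\sigma}$, $m_n^\epsilon\defeq\inf_{\rho\in\set{A}_n,\sigma\in\set{B}_n}D_{\max}^\epsilon\infdiv*{\rho}{\sigma}$, and $r_n\defeq\inf_{\rho\in\set{A}_n,\sigma\in\set{B}_n}D\infdiv*{\rho}{\sigma}$; Condition~5 makes all three finite. First I would observe that $\underline D\defeq\lim_n\tfrac1n r_n$ exists and equals $\inf_n\tfrac1n r_n$: by Condition~4 and additivity of $D\infdiv*{\cdot}{\cdot}$ on tensor products, $r_{m+k}\le r_m+r_k$ (tensor near-optimal pairs at scales $m$ and $k$), so Fekete's subadditivity lemma applies. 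The target is then $\lim_n\tfrac1n e_n^\epsilon=\lim_n\tfrac1n m_n^\epsilon=\underline D$ for every $\epsilon\in(0,1)$.

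\textbf{Upper bounds (blocking).} I would next show $\limsup_n\tfrac1n e_n^\epsilon\le\underline D$ and $\limsup_n\tfrac1n m_n^\epsilon\le\underline D$. Given $\delta>0$, pick $k$ and a near-optimal pair $(\rho,\sigma)\in\set{A}_k\times\set{B}_k$ with $\tfrac1k D\infdiv*{\rho}{\sigma}\le\underline D+\delta$ (automatically $\rho\ll\sigma$), and fix $(\rho_0,\sigma_0)\in\set{A}_1\times\set{B}_1$ with $D_{\max}\infdiv*{\rho_0}{\sigma_0}<\infty$ (possible by Condition~5 at $n=1$). For $n=k\ell+j$ with $0\le j<k$, Condition~4 gives $\rho^{\tensor\ell}\tensor\rho_0^{\tensor j}\in\set{A}_n$ and $\sigma^{\tensor\ell}\tensor\sigma_0^{\tensor j}\in\set{B}_n$; combining a routine tensorisation bound $D_h^\epsilon\infdiv*{\tau_1\tensor\tau_2}{\mu_1\tensor\mu_2}\le D_h^\epsilon\infdiv*{\tau_1}{\mu_1}+D_{\max}\infdiv*{\tau_2}{\mu_2}$ with the ordinary i.i.d.\ AEP $\tfrac1\ell D_h^\epsilon\infdiv*{\rho^{\tensor\ell}}{\sigma^{\tensor\ell}}\to D\infdiv*{\rho}{\sigma}$ yields $\tfrac1n e_n^\epsilon\le\tfrac1k D\infdiv*{\rho}{\sigma}+o(1)\le\underline D+\delta+o(1)$, so $\limsup_n\tfrac1n e_n^\epsilon\le\underline D$ after $\delta\downarrow0$. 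The same argument with the i.i.d.\ max-entropy AEP gives $\limsup_n\tfrac1n m_n^\epsilon\le\underline D$.

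\textbf{Lower bound (the hard direction).} The substantive step is $\liminf_n\tfrac1n e_n^\epsilon\ge\underline D$, which is a generalized quantum Stein's lemma with composite hypotheses on \emph{both} sides. The route: (i) by Condition~3 and convexity (Condition~2), the permutation twirl $\mathcal{T}_n(\cdot)=\tfrac1{n!}\sum_\pi U_\pi(\cdot)U_\pi^\dagger$ is a channel that maps $\set{A}_n$ into $\set{A}_n$ and $\set{B}_n$ into $\set{B}_n$, so by data processing one may assume the near-optimal $\rho_n\in\set{A}_n$ is permutation invariant; (ii) apply a de Finetti-type / post-selection reduction to approximate such $\rho_n$ by a mixture $\int\omega^{\tensor n}\diff\mu(\omega)$ up to a $\mathrm{poly}(n)$ factor, reducing to distinguishing i.i.d.\ states from $\set{B}_n$; (iii) use Condition~4 so that the tensor-power reference states that arise on the alternative side stay inside $\set{B}_n$, Condition~5 to keep the error exponents under control, and conclude with the ordinary single-pair quantum Stein's lemma. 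This reduction is the technically heavy part --- essentially the content of the generalized quantum Stein's lemma, for which I would import the argument of~\cite{fang2024generalized} --- and it is where all five hypotheses are genuinely used; I expect it to be the main obstacle. Together with the blocking bound this gives $\lim_n\tfrac1n e_n^\epsilon=\underline D$.

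\textbf{Closing the argument.} Finally I would transfer the result to the other two quantities: the elementary ordering $D_h^\epsilon\infdiv*{\rho}{\sigma}\le D_{\max}^\epsilon\infdiv*{\rho}{\sigma}+\log\tfrac{1}{1-3\epsilon}$ gives $e_n^\epsilon\le m_n^\epsilon+O(1)$, so $\liminf_n\tfrac1n m_n^\epsilon\ge\underline D$, and with the blocking $\limsup$ bound this yields $\lim_n\tfrac1n m_n^\epsilon=\underline D$; and $\lim_n\tfrac1n r_n=\underline D$ by construction. Since no estimate depended on $\epsilon$ beyond $\epsilon\in(0,1)$, all three limits coincide for every such $\epsilon$, which is the claim.
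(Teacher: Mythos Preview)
The paper does not supply its own proof of this lemma: it is quoted as \cite[Theorem~26]{fang2024generalized} and used as a black box in the achievability arguments of Sections~\ref{sec:ea:comp:fqavc:achievability} and~\ref{sec:sr:comp:fqavc}. There is therefore nothing in the present paper to compare your sketch against. Your outline is a reasonable high-level plan for reproving the cited result --- Fekete subadditivity for the relative-entropy rate, an i.i.d.\ blocking argument for the $\limsup$ direction, and the generalized Stein reduction (permutation-twirl plus de~Finetti/post-selection) for the $\liminf$ direction --- and you correctly flag the last of these as the heavy step and say you would import the argument of~\cite{fang2024generalized} there. That is precisely what this paper does, only at the level of the full statement rather than inside a proof.

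One small caution on your closing step: the precise form and range of validity of the one-shot comparison $D_h^\epsilon\le D_{\max}^\epsilon+\log\tfrac{1}{1-3\epsilon}$ should be checked carefully (the standard inequalities between $D_h^\epsilon$ and $D_{\max}^\epsilon$ come with various smoothing-parameter shifts and constants, see e.g.\ \cite{tomamichel2013hierarchy}), but any of the known versions suffices at first order, so this does not affect the asymptotic conclusion.
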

\begin{proof}[Proof of Proposition~\ref{prop:achievability:ea:fqavc}]
We start by applying~\eqref{eq:minimax:2} on $n$ copies of the channel $\mathcal{N}_{\sys{AE}\to\sys{B}}$, and then use the finite-blocklength achievability bound in~\cite[Theorem~1]{anshu2018building} (also see~\cite[Theorem~8]{qi2018applications}):
\begin{align}
\log{M^{\fnc{EA}, \forall\syss{E}_1^n}_{2\epsilon+2\delta}(\mathcal{N}^{\tensor n}_{\sys{AE}\to\sys{B}})} &+2\log{\frac{1}{\delta}}
= \inf_{\sigma_{\syss{E}_1^n}} \log{M_{2\epsilon+2\delta}^\fnc{EA}(\mathcal{N}^{\tensor n}(\cdot \tensor \sigma_{\syss{E}_1^n}))} +2\log{\frac{1}{\delta}}\\
&\geq \adjustlimits \inf_{\sigma_{\syss{E}_1^n}} \sup_{\rho_{\syss{A}_1^n}} D_h^\epsilon\infdiv*{\mathcal{N}^{\tensor n}_{\sys{AE}\to\sys{B}}(\proj{\rho}_{\hat{\syss{A}}_1^n\syss{A}_1^n}\tensor\sigma_{\syss{E}_1^n})}{\rho_{\hat{\syss{A}}_1^n}\tensor \mathcal{N}^{\tensor n}_{\sys{AE}\to\sys{B}}(\rho_{\syss{A}_1^n}\tensor \sigma_{\syss{E}_1^n})}\\
&\geq \adjustlimits  \sup_{\rho_{\syss{A}_1^n}} \inf_{\sigma_{\syss{E}_1^n}} D_h^\epsilon\infdiv*{\mathcal{N}^{\tensor n}_{\sys{AE}\to\sys{B}}(\proj{\rho}_{\hat{\syss{A}}_1^n\syss{A}_1^n}\tensor\sigma_{\syss{E}_1^n})}{\rho_{\hat{\syss{A}}_1^n}\tensor \mathcal{N}^{\tensor n}_{\sys{AE}\to\sys{B}}(\rho_{\syss{A}_1^n}\tensor \sigma_{\syss{E}_1^n})}.
\end{align}
Restricting $\rho_{\syss{A}_1^n}=\rho_\sys{A}^{\tensor n}$, we have 
\begin{align}
\log{M^{\fnc{EA}, \forall\syss{E}_1^n}_{2\epsilon+2\delta}(\mathcal{N}^{\tensor n}_{\sys{AE}\to\sys{B}})} +2\log{\frac{1}{\delta}}
\geq \adjustlimits  \sup_{\rho_\sys{A}} \inf_{\sigma_{\syss{E}_1^n}} D_h^\epsilon\infdiv*{\mathcal{N}^{\tensor n}_{\sys{AE}\to\sys{B}}(\proj{\rho}_{\hat{\sys{A}}\sys{A}}^{\tensor n}\tensor\sigma_{\syss{E}_1^n})}{\rho_{\hat{\sys{A}}}^{\tensor n}\tensor \mathcal{N}^{\tensor n}_{\sys{AE}\to\sys{B}}(\rho_{\sys{A}}^{\tensor n}\tensor \sigma_{\syss{E}_1^n})},
\end{align}
\ie, 
\begin{align}\label{eq:n-shot:EA:achievability}
\log{M^{\fnc{EA}, \forall\syss{E}_1^n}_{2\epsilon+2\delta}(\mathcal{N}^{\tensor n}_{\sys{AE}\to\sys{B}})} \geq 
\inf_{\sigma_{\syss{E}_1^n}} D_h^\epsilon\infdiv*{\mathcal{N}^{\tensor n}_{\sys{AE}\to\sys{B}}(\proj{\rho}_{\hat{\sys{A}}\sys{A}}^{\tensor n}\tensor\sigma_{\syss{E}_1^n})}{\rho_{\hat{\sys{A}}}^{\tensor n}\tensor \mathcal{N}^{\tensor n}_{\sys{AE}\to\sys{B}}(\rho_{\sys{A}}^{\tensor n}\tensor \sigma_{\syss{E}_1^n})} 
+ 2\log{\delta}
\end{align}
for all $\rho_{\sys{A}}\in\densop(\hilbert_\sys{A})$.
Let $\epsilon=\delta=\eta/4>0$, divide both sides by $n$ and let $n\to\infty$, we have
\begin{align}
&\hspace{15pt}\liminf_{n\to\infty} \frac{1}{n} \log{M^{\fnc{EA}, \forall\syss{E}_1^n}_{\eta}(\mathcal{N}^{\tensor n}_{\sys{AE}\to\sys{B}})} \nonumber\\
\label{eq:rate:ach:Dh:1}
&\geq \lim_{n\to\infty} \frac{1}{n} \inf_{\sigma_{\syss{E}_1^n}} D_h^\epsilon\infdiv*{\mathcal{N}^{\tensor n}_{\sys{AE}\to\sys{B}}(\proj{\rho}_{\hat{\sys{A}}\sys{A}}^{\tensor n}\tensor\sigma_{\syss{E}_1^n})}{\rho_{\hat{\sys{A}}}^{\tensor n}\tensor \mathcal{N}^{\tensor n}_{\sys{AE}\to\sys{B}}(\rho_{\sys{A}}^{\tensor n}\tensor \sigma_{\syss{E}_1^n})}\\
\label{eq:rate:ach:Dh:2}
&\geq \lim_{n\to\infty} \frac{1}{n} \adjustlimits\inf_{\sigma_{\syss{E}_1^n}}\inf_{\tilde{\sigma}_{\syss{E}_1^n}} D_h^\epsilon\infdiv*{\mathcal{N}^{\tensor n}_{\sys{AE}\to\sys{B}}(\proj{\rho}_{\hat{\sys{A}}\sys{A}}^{\tensor n}\tensor\sigma_{\syss{E}_1^n})}{\rho_{\hat{\sys{A}}}^{\tensor n}\tensor \mathcal{N}^{\tensor n}_{\sys{AE}\to\sys{B}}(\rho_{\sys{A}}^{\tensor n}\tensor \tilde{\sigma}_{\syss{E}_1^n})}
\end{align}
Substitute
\begin{align}
    \set{A}_n &\defeq \left\{\mathcal{N}^{\tensor n}_{\sys{AE}\to\sys{B}}(\proj{\rho}_{\hat{\sys{A}}\sys{A}}^{\tensor n}\tensor\sigma_{\syss{E}_1^n})\middle\vert \sigma_{\syss{E}_1^n}\in\densop(\hilbert_{\syss{E}_1^n})\right\}, \\
    \set{B}_n &\defeq \left\{\rho_{\hat{\sys{A}}}^{\tensor n}\tensor \mathcal{N}^{\tensor n}_{\sys{AE}\to\sys{B}}(\rho_\sys{A}^{\tensor n}\tensor \sigma_{\syss{E}_1^n})\middle\vert \sigma_{\syss{E}_1^n}\in\densop(\hilbert_{\syss{E}_1^n})\right\}.
\end{align}
One can check rather directly that the conditions in Lemma~\ref{lem:generalized:AEP} hold.
Thus,
\begin{align}
\label{eq:lower:Dh:2}
\text{RHS of~\eqref{eq:rate:ach:Dh:2}} & = \lim_{n\to\infty}\frac{1}{n} \inf_{\sigma_{\syss{E}_1^n}, \tilde{\sigma}_{\syss{E}_1^n}} D\infdiv*{\mathcal{N}^{\tensor n}_{\sys{AE}\to\sys{B}}(\proj{\rho}_{\hat{\sys{A}}\sys{A}}^{\tensor n}\tensor\sigma_{\syss{E}_1^n})}{\rho_{\hat{\sys{A}}}^{\tensor n}\tensor \mathcal{N}^{\tensor n}_{\sys{AE}\to\sys{B}}(\rho_\sys{A}^{\tensor n}\tensor \tilde{\sigma}_{\syss{E}_1^n})} \\
\label{eq:lower:Dh:3}
& \geq \lim_{n\to\infty}\frac{1}{n}\cdot \adjustlimits \inf_{\sigma_{\syss{E}_1^n}}\inf_{\sigma_{\syss{B}_1^n}} D\infdiv*{\mathcal{N}^{\tensor n}_{\sys{AE}\to\sys{B}}(\proj{\rho}_{\hat{\sys{A}}\sys{A}}^{\tensor n}\tensor\sigma_{\syss{E}_1^n})}{\rho_{\hat{\sys{A}}}^{\tensor n}\tensor \sigma_{\syss{B}_1^n}} \\
\label{eq:lower:Dh:4}
& = \lim_{n\to\infty}\frac{1}{n}\cdot \inf_{\sigma_{\syss{E}_1^n}} D\infdiv*{\mathcal{N}^{\tensor n}_{\sys{AE}\to\sys{B}}(\proj{\rho}_{\hat{\sys{A}}\sys{A}}^{\tensor n}\tensor\sigma_{\syss{E}_1^n})}{\rho_{\hat{\sys{A}}}^{\tensor n}\tensor \rho_{\syss{B}_1^n}}\\
\label{eq:lower:Dh:5}
& = \lim_{n\to\infty}\frac{1}{n}\cdot \inf_{\sigma_{\syss{E}_1^n}} I(\tilde{\sys{A}}_1,\ldots,\tilde{\sys{A}}_n : \sys{B}_1,\ldots,\sys{B}_n)_{\id_{\hat{\syss{A}}_1^n}\tensor\mathcal{N}^{\tensor n}_{\sys{AE}\to\sys{B}}(\proj{\rho}_{\hat{\sys{A}}\sys{A}}^{\tensor n}\tensor \sigma_{\syss{E}_1^n})}\\
\label{eq:lower:Dh:6}
& \geq \lim_{n\to\infty}\frac{1}{n} \inf_{\sigma_{\syss{E}_1^n}} \sum_{\ell=1}^n I(\tilde{\sys{A}}_\ell : \sys{B}_\ell)_{\id_{\hat{\sys{A}}_\ell}\tensor\mathcal{N}_{\sys{A}_\ell\sys{E}_\ell\to\sys{B}_\ell}(\proj{\rho}_{\hat{\sys{A}}\sys{A}}\tensor \sigma_{\sys{E}_\ell})}\\
\label{eq:lower:Dh:7}
&= \inf_{\sigma_{\sys{E}}} I(\tilde{\sys{A}} : \sys{B})_{\id_{\hat{\sys{A}}}\tensor\mathcal{N}_{\sys{A}_\ell\sys{E}\to\sys{B}}(\proj{\rho}_{\hat{\sys{A}}\sys{A}}\tensor \sigma_{\sys{E}})}
\end{align}
where~\eqref{eq:lower:Dh:2} is due to Lemma~\ref{lem:generalized:AEP};
\eqref{eq:lower:Dh:4} uses the property that $\inf_{\sigma_\sys{B}\in\densop(\hilbert_\sys{B})} D\infdiv{\rho_\sys{AB}}{\rho_\sys{A}\tensor\sigma_\sys{B}} = D\infdiv{\rho_\sys{AB}}{\rho_\sys{A}\tensor\rho_\sys{B}}$ where $\rho_\sys{B}$ is the reduced operator of $\rho_{\sys{AB}}$ on system $\sys{B}$;
and for~\eqref{eq:lower:Dh:6}, we used the information inequality that 
\begin{align}
I(\sys{A}_1'\sys{A}_2':\sys{B}_1\sys{B}_2)
&= I(\sys{A}'_1:\sys{B}_1\sys{B}_2) + I(\sys{A}'_2:\sys{A'}_1\sys{B}_1\sys{B}_2 | \sys{A}'_1) \\
&\geq I(\sys{A}'_1:\sys{B}_1) + I(\sys{A}'_2:\sys{A'}_1\sys{B}_1\sys{B}_2) - I(\sys{A}_1':\sys{A}_2') \\
&\geq I(\sys{A}'_1:\sys{B}_1) + I(\sys{A}'_2:\sys{B}_2)
\end{align}
when $I(\sys{A}_1':\sys{A}_2')=0$.

Since~\eqref{eq:rate:ach:Dh:2} holds for all density operators $\rho_\sys{A}$, we have
\begin{align}
\liminf_{n\to\infty} \frac{1}{n} \log{M^{\fnc{EA}, \forall\syss{E}_1^n}_{\eta}(\mathcal{N}^{\tensor n}_{\sys{AE}\to\sys{B}})} &\geq \sup_{\rho_{\sys{A}}} \inf_{\sigma_{\sys{E}}} I(\tilde{\sys{A}} : \sys{B})_{\id_{\hat{\sys{A}}}\tensor\mathcal{N}_{\sys{A}_\ell\sys{E}\to\sys{B}}(\proj{\rho}_{\hat{\sys{A}}\sys{A}}\tensor \sigma_{\sys{E}})}\\
\label{eq:proof:achievability:ea:fqavc:last}
&=\adjustlimits \inf_{\sigma_{\sys{E}}} \sup_{\rho_{\sys{A}}} I(\tilde{\sys{A}} : \sys{B})_{\id_{\hat{\sys{A}}}\tensor\mathcal{N}_{\sys{A}_\ell\sys{E}\to\sys{B}}(\proj{\rho}_{\hat{\sys{A}}\sys{A}}\tensor \sigma_{\sys{E}})}
= I_{\fnc{adv.}\sys{E}}(\mathcal{N}_{\sys{AE}\to\sys{B}})
\end{align}
where we have used the minimax theorem since $D\infdiv{\mathcal{N}(\proj{\rho}_{\hat{\sys{A}}\sys{A}})}{\rho_\sys{A}\tensor\sigma_\sys{B}}$ is concave in $\rho_\sys{A}$ and convex in $\mathcal{N}$.
\end{proof}
\section{Shared-randomness-assisted classical capacities of FQAVCs}
\label{sec:sr:comp:fqavc}
In this section, we turn our attention to shared-randomness-assisted coding for classical communication over channels with a quantum jammer. 
We emphasize that we do not impose any constraint on inputs from the jammer, \ie, the jammer can supply entangled jamming signals across multiple channels uses.
In the following, we consider two scenarios: the underlying channels are classical-quantum, or fully quantum (\ie, a FQAVC).
Some of the methods used in this section are similar or same to Section~\ref{sec:ea:comp:fqavc}, in which case we will only describe the differences.
\subsection{SR-assisted capacity of CQ-FQAVCs}
\label{sec:sr:comp:fqavc:cq}
This is a special case of FQAVC where we restrict system $\sys{A}$ to be classical.
Such a channel can be denoted by $\mathcal{N}_{\rv{X}\sys{E}\to\sys{B}}$ where we have replaced the input system $\sys{A}$ by a classical system $\rv{X}$.
In this section, we would like to prove the following expression of the classical capacity over such a channel under the assistance of unconstrained shared randomness between the sender and the receiver:
\begin{equation}\label{eq:cq:sr:fqavc:capacity}
C_\fnc{fqavc}^\fnc{SR}(\mathcal{N}_{\rv{X}\sys{E}\to\sys{B}}) = \adjustlimits\inf_{\sigma_{\sys{E}}\in\densop(\hilbert_\sys{E})} \sup_{p_\rv{X}\in\set{P}(\set{X})} I(\rv{X}':\sys{B})_{\mathcal{N}_{\rv{X}\sys{E}\to\sys{B}}(p_\rv{X}\cdot\delta_{\rv{X},\rv{X}'}\tensor\sigma_\sys{E})}
\end{equation}
Similar to Section~\ref{sec:ea:comp:fqavc}, we split the proof into the following converse and achievability bounds.
Both bounds are based on applying~\eqref{eq:minimax:1} on $n$-copies of the channel $\mathcal{N}_{\rv{X}\sys{E}\to\sys{B}}$, \ie, 
\begin{equation}
M^{\fnc{SR}, \forall\syss{E}_1^n}_{\epsilon}\left(\mathcal{N}^{\tensor n}_{\rv{X}\sys{E}\to\sys{B}}\right)
= \inf_{\sigma_{\syss{E}_1^n}} M^\fnc{SR}_\epsilon\left(\mathcal{N}^{\tensor n}_{\rv{X}\sys{E}\to\sys{B}}(\cdot\tensor\sigma_{\syss{E}_1^n})\right)
= \inf_{\sigma_{\syss{E}_1^n}} M_\epsilon\left(\mathcal{N}^{\tensor n}_{\rv{X}\sys{E}\to\sys{B}}(\cdot\tensor\sigma_{\syss{E}_1^n})\right),
\end{equation}
where the second equality holds due to the standard derandomization argument.
\paragraph{Strong converse.}
The converse is relatively straightforward, and follows in the same way as in Section~\ref{sec:ea:comp:fqavc:converse} by restricting $\rho_{\syss{E}_1^n}$ to product states, namely
\begin{align}
\frac{1}{n} \log{M_\epsilon^{\fnc{SR},\forall \syss{E}_1^n}(\mathcal{N}_{\rv{X}\sys{E}\to\sys{B}}^{\tensor n})}
&\leq \frac{1}{n} \log{M_\epsilon\left(\left[\mathcal{N}_{\rv{X}\sys{E}\to\sys{B}}(\cdot\tensor\sigma_\sys{E})\right]^{\tensor n}\right)}
\end{align}
for all jammer state $\sigma_\sys{E}\in\densop(\hilbert_\sys{E})$.
Let $n\to\infty$, and use the strong converse theorem of the classical capacity of cq channels (see,~\eg,~\cite{winter1999coding}), we have 
\begin{align}
\limsup_{n\to\infty} \frac{1}{n} \log{M_\epsilon^{\fnc{SR},\forall \syss{E}_1^n}(\mathcal{N}_{\rv{X}\sys{E}\to\sys{B}}^{\tensor n})}
&\leq \inf_{\sigma_\sys{E}\in\densop(\hilbert_\sys{E})} \limsup_{n\to\infty} 
 \frac{1}{n} \log{M_\epsilon\left(\left[\mathcal{N}_{\rv{X}\sys{E}\to\sys{B}}(\cdot\tensor\sigma_\sys{E})\right]^{\tensor n}\right)} \\
&\leq \inf_{\sigma_\sys{E}\in\densop(\hilbert_\sys{E})} C\left(\mathcal{N}_{\rv{X}\sys{E}\to\sys{B}}(\cdot\tensor\sigma_\sys{E})\right)
= \text{RHS of~\eqref{eq:cq:sr:fqavc:capacity}}
\end{align}
for all $\epsilon\in(0,1)$.
\paragraph{Achievability.}
Achievability follows from the same flow of arguments as in Section~\ref{sec:ea:comp:fqavc:achievability}, and we only describe the key differences below.
In particular, by the one-shot achievability bound on CQ channel capacity as in~\cite[Eq.~(12)]{cheng2023simple} (also see~\cite{hayashi2003general} for an earlier bound that this one improved upon), we have (\cf~\eqref{eq:n-shot:EA:achievability}), 
\begin{equation}
\log{M^{\fnc{SR}, \forall\syss{E}_1^n}_{\epsilon}\left(\mathcal{N}^{\tensor n}_{\rv{X}\sys{E}\to\sys{B}}\right)} \geq 
\inf_{\sigma_{\syss{E}_1^n}} D_h^{\epsilon-\delta}\infdiv*{\mathcal{N}_{\rv{X}\sys{E}\to\sys{B}}^{\tensor n}(p_{\rv{X}'\rv{X}}^{\tensor n}\tensor \sigma_{\syss{E}_1^n})}{p_{\rv{X}'}^{\tensor n} \tensor \mathcal{N}_{\rv{X}\sys{E}\to\sys{B}}^{\tensor n}(p_\rv{X}^{\tensor n}\tensor \sigma_{\syss{E}_1^n})} - \log{\frac{1}{\delta}}
\end{equation}
for any pmf $p_\rv{X}\in\set{P}(\set{X})$ where $\rv{X}'$ is a copy of $\rv{X}$, \ie, $p_{\rv{X}'\rv{X}}(x',x) = p_\rv{X}(x)\cdot \mathbbm{1}\{x=x'\}$.
For each pmf $p_\rv{X}$, let 
\begin{align}
    \set{A}_n &\defeq \left\{\mathcal{N}_{\rv{X}\sys{E}\to\sys{B}}^{\tensor n}(p_{\rv{X}'\rv{X}}^{\tensor n}\tensor \sigma_{\syss{E}_1^n})\middle\vert \sigma_{\syss{E}_1^k}\in\densop(\hilbert_{\syss{E}_1^k})\right\}, \\
    \set{B}_n &\defeq \left\{p_{\rv{X}'}^{\tensor n} \tensor \mathcal{N}_{\rv{X}\sys{E}\to\sys{B}}^{\tensor n}(p_\rv{X}^{\tensor n}\tensor \sigma_{\syss{E}_1^n})\middle\vert \sigma_{\syss{E}_1^k}\in\densop(\hilbert_{\syss{E}_1^k})\right\}.
\end{align}
Following the arguments from ~\eqref{eq:n-shot:EA:achievability} to~\eqref{eq:proof:achievability:ea:fqavc:last}, we have 
\begin{equation}
\liminf_{k\to\infty} \frac{1}{n} \log{M^{\fnc{SR}, \forall\syss{E}_1^n}_{\epsilon}\left(\mathcal{N}^{\tensor n}_{\rv{X}\sys{E}\to\sys{B}}\right)}
\geq \adjustlimits\inf_{\sigma_{\sys{E}}\in\densop(\hilbert_\sys{E})} \sup_{p_\rv{X}\in\set{P}(\set{X})} I(\rv{X}':\sys{B})_{\mathcal{N}_{\rv{X}\sys{E}\to\sys{B}}(p_\rv{X}\cdot\delta_{\rv{X},\rv{X}'}\tensor\sigma_\sys{E})}
\end{equation}
for all $\epsilon\in(0,1)$.
\subsection{SR-assisted capacity of (QQ-)FQAVCs}
\label{sec:sr:comp:fqavc:qq}
Now, we turn our attention to FQAVCs with shared-randomness assistance. 
We show the following weaker result
\begin{equation}\label{eq:qq:sr:fqavc:capacity}
C_\fnc{fqavc}^\fnc{SR}(\mathcal{N}_{\sys{AE}\to\sys{B}}) = \multiadjustlimits{
    \lim_{n\to\infty},
    \inf_{\sigma_{\syss{E}_1^n}\in\densop(\hilbert_{\syss{E}_1^n})},
    \sup_{\rho_{\rv{X}\syss{A}_1^n}}} \frac{1}{n} I(\rv{X}:\syss{B}_1^n)_{\mathcal{N}_{\sys{AE}\to\sys{B}}^{\tensor n}(\rho_{\rv{X}\syss{A}_1^n}\tensor\sigma_{\syss{E}_1^n})}
\end{equation}
where the supremum is taken over all cq-states $\rho_{\rv{X}\syss{A}_1^n}$.
Again, we split the proof in two parts, both of which are based off applying~\eqref{eq:minimax:1} on $n$-copies of the channel $\mathcal{N}_{\sys{AE}\to\sys{B}}$, \ie, 
\begin{equation}
M^{\fnc{SR}, \forall\syss{E}_1^n}_{\epsilon}\left(\mathcal{N}^{\tensor n}_{\sys{AE}\to\sys{B}}\right)
= \inf_{\sigma_{\syss{E}_1^n}} M^\fnc{SR}_\epsilon\left(\mathcal{N}^{\tensor n}_{\sys{AE}\to\sys{B}}(\cdot\tensor\sigma_{\syss{E}_1^n})\right)
= \inf_{\sigma_{\syss{E}_1^n}} M_\epsilon\left(\mathcal{N}^{\tensor n}_{\sys{AE}\to\sys{B}}(\cdot\tensor\sigma_{\syss{E}_1^n})\right),
\end{equation}
where the second equality holds, again, due to the standard derandomization argument.
\smallskip
\paragraph{Weak converse.}
By applying the one-shot converse bound as in~\cite[Theorem~1]{wang2012one}, we have 
\begin{equation}
    \frac{1}{n} \log{M_\epsilon^{\fnc{SR},\forall \syss{E}_1^n}(\mathcal{N}_{\sys{AE}\to\sys{B}}^{\tensor n})}
    \leq \adjustlimits \inf_{\sigma_{\syss{E}_1^n}} \sup_{\rho_{\rv{X}\syss{A}_1^n}} \frac{1}{n} D_h^\epsilon\infdiv*{\mathcal{N}_{\sys{AE}\to\sys{B}}^{\tensor n}(\rho_{\rv{X}\syss{A}_1^n})}{\rho_\rv{X}\tensor \rho_{\syss{B}_1^n}}_{\mathcal{N}_{\sys{AE}\to\sys{B}}^{\tensor n}(\rho_{\rv{X}\syss{A}_1^n})}.
\end{equation}
Use the inequalities $D_h^\epsilon\infdiv*{\rho}{\sigma}\leq D_s^{\epsilon+\delta}\infdiv*{\rho}{\sigma} + \log{\frac{1}{\delta}}$ and $D_s^\epsilon\infdiv*{\rho}{\sigma}\leq \frac{1}{1-\epsilon}D\infdiv*{\rho}{\sigma}$ (see, \eg,~\cite{tomamichel2013hierarchy}), we have 
\begin{equation}
\frac{1}{n} \log{M_\epsilon^{\fnc{SR},\forall \syss{E}_1^n}(\mathcal{N}_{\sys{AE}\to\sys{B}}^{\tensor n})}
\leq \adjustlimits \inf_{\sigma_{\syss{E}_1^n}} \sup_{\rho_{\rv{X}\syss{A}_1^n}} \frac{1}{n(1-\epsilon-\delta)} I(\rv{X}:\syss{B}_1^n)_{\mathcal{N}_{\sys{AE}\to\sys{B}}^{\tensor n}(\rho_{\rv{X}\syss{A}_1^n})} + \log{\frac{1}{\delta}}
\end{equation}
for any $\epsilon\in(0,1)$ and $\delta\in(0,1-\epsilon)$.
Hence, for any positive sequences $\{\epsilon_n\}_{n\in\mathbb{N}}$ upper bounded by $1$ and tending to $0$, by picking $\{\delta_n\}_{n\in\mathbb{N}}$ to be a sequence that tends to $0$ slow enough, we have 
\begin{equation}
\limsup_{n\to\infty}\frac{1}{n} \log{M_{\epsilon_n}^{\fnc{SR},\forall \syss{E}_1^n}(\mathcal{N}_{\sys{AE}\to\sys{B}}^{\tensor n})}
\leq \multiadjustlimits{
    \lim_{n\to\infty},
    \inf_{\sigma_{\syss{E}_1^n}\in\densop(\hilbert_{\syss{E}_1^n})},
    \sup_{\rho_{\rv{X}\syss{A}_1^n}}} \frac{1}{n} I(\rv{X}:\syss{B}_1^n)_{\mathcal{N}_{\sys{AE}\to\sys{B}}^{\tensor n}(\rho_{\rv{X}\syss{A}_1^n}\tensor\sigma_{\syss{E}_1^n})}.
\end{equation}
\paragraph{Achievability.}
The achievability in this case also follows similarly as that in Section~\ref{sec:sr:comp:fqavc:cq} (and as in Section~\ref{sec:ea:comp:fqavc:achievability}).
In particular, the same one-shot achievability bound in~\cite[Eq.~(12)]{cheng2023simple} also applies for the (non-assisted) classical communication over fully quantum channels, namely 
\begin{equation}
\log{M^{\fnc{SR}, \forall\syss{E}_1^n}_{\epsilon}\left(\mathcal{M}^{\tensor k}_{\sys{AE}\to\sys{B}}\right)} \geq 
\inf_{\sigma_{\syss{E}_1^k}} D_h^{\epsilon-\delta}\infdiv*{\mathcal{M}_{\sys{AE}\to\sys{B}}^{\tensor k}(\rho_{\rv{X}\sys{A}}^{\tensor k}\tensor \sigma_{\syss{E}_1^k})}{\rho_\rv{X}^{\tensor k} \tensor \mathcal{M}_{\sys{AE}\to\sys{B}}^{\tensor k}(\rho_\sys{A}^{\tensor k}\tensor \sigma_{\syss{E}_1^k})} - \log{\frac{1}{\delta}}
\end{equation}
for any classical-quantum state $\rho_{\rv{X}\sys{A}}$\footnote{Note that, notation-wise, $\rho_\sys{X}\defeq \tr_\sys{A}(\rho_{\rv{X}\sys{A}}) = \sum_{x} p_\rv{X}(x)\proj{x}$.}.
By following the arguments from ~\eqref{eq:n-shot:EA:achievability} to~\eqref{eq:proof:achievability:ea:fqavc:last} with 
\begin{align}
    \set{A}_n &\defeq \left\{\mathcal{M}_{\sys{AE}\to\sys{B}}^{\tensor k}(\rho_{\rv{X}\sys{A}}^{\tensor k}\tensor \sigma_{\syss{E}_1^k})\middle\vert \sigma_{\syss{E}_1^k}\in\densop(\hilbert_{\syss{E}_1^k})\right\}, \\
    \set{B}_n &\defeq \left\{\rho_\rv{X}^{\tensor k} \tensor \mathcal{M}_{\sys{AE}\to\sys{B}}^{\tensor k}(\rho_\sys{A}^{\tensor k}\tensor \sigma_{\syss{E}_1^k})\middle\vert \sigma_{\syss{E}_1^k}\in\densop(\hilbert_{\syss{E}_1^k})\right\}.
\end{align}
when applying Lemma~\ref{lem:generalized:AEP}, we end up with 
\begin{equation}
\liminf_{k\to\infty} \frac{1}{n}\log{M^{\fnc{SR}, \forall\syss{E}_1^k}_{\epsilon}\left(\mathcal{M}^{\tensor k}_{\sys{AE}\to\sys{B}}\right)}
\geq \adjustlimits\inf_{\sigma_\sys{E}} \sup_{\rho_{\rv{X}\sys{A}}} I(\rv{X}:\sys{B})_{\mathcal{M}_{\sys{AE}\to\sys{B}}(\rho_{\rv{X}\sys{A}}\tensor \sigma_{\sys{E}})}
\end{equation}
where $\rho_{\rv{X}\sys{A}}$ are cq-states.
Finally, replacing the channel $\mathcal{M}$ by $\mathcal{N}_{\sys{A}\to\sys{B}}^{\tensor m}$, and let $m\to\infty$ we have 
\begin{equation}
\adjustlimits\liminf_{m\to\infty}\liminf_{k\to\infty} \frac{1}{mk}\log{M^{\fnc{SR}, \forall\syss{E}_1^{mk}}_{\epsilon}\left(\mathcal{N}^{\tensor mk}_{\sys{AE}\to\sys{B}}\right)}
\geq \multiadjustlimits{
    \lim_{m\to\infty},
    \inf_{\sigma_{\syss{E}_1^m}},
    \sup_{\rho_{\rv{X}\syss{A}_1^m}}
    } \frac{1}{m} I(\rv{X}:\syss{B}_1^n)_{\mathcal{N}_{\sys{AE}\to\sys{B}}^{\tensor m}(\rho_{\rv{X}\syss{A}_1^m}\tensor \sigma_{\syss{E}_1^m})}.
\end{equation}
\section{Conclusion}\label{sec:discussion}
We proposed a minimax approach for analyzing channel coding over fully quantum arbitrarily varying channels (FQAVCs).
We avoided using de Finetti reduction techniques, thereby removing the obstacle that had led previous studies to constrain jammer systems to finite dimensions.
By combining the minimax approach with tools from finite-blocklength quantum information theory and recent advances in the asymptotic equipartition property (AEP), we derived expressions of classical capacities of QQ- and CQ-FQAVCs, under both entanglement-assisted and shared-randomness-assisted scenarios, recovering previous results.
The core technical innovation lies with the minimax approach (Section~\ref{sec:minimax}), which has yet again shown its versatility in handling information tasks that requires universality (\eg, see the discussion in~\cite{cao2024quantum}).
\paragraph{Quantum capacities of FQAVCs.}~The approach developed in this paper generalizes naturally to the study of quantum capacities of fully quantum arbitrarily varying channels (FQAVCs). 
In particular, the analysis for entanglement-assisted classical capacities (Section~\ref{sec:ea:comp:fqavc}) extends straightforwardly to quantum capacities, thanks to standard reductions via superdense coding and quantum teleportation. 
For the shared-randomness-assisted quantum capacities of FQAVCs, the arguments proceed similarly to those in Section~\ref{sec:sr:comp:fqavc:qq}, with the following adaptations:
\begin{itemize}
    \item Modify Theorem~\ref{thm:minimax} by replacing the classical message systems $[1:M]$ with $2^M$-dim quantum systems $\sys{M}$ and $\sys{M}'$, and consider the entanglement fidelity:
    \begin{equation}
		F(\zeta, \sigma_\sys{E}) = \bra{\gamma} \zeta\circ\mathcal{N}(\proj{\gamma}_\sys{RM}\tensor\sigma_\sys{E}) \ket{\gamma}_{\sys{R}\hat{\sys{M}}},
    \end{equation}
    where $\ket{\gamma}$ denotes a maximally entangled state.
    This quantity, like the average error in~\eqref{eq:minimax:error}, is bilinear in the jammer state $\sigma_\sys{E}$ and  the joint input-output CPTP map $\zeta$.
    \item Replace the classical communication achievability bounds based on hypothesis testing divergence with achievability bounds for quantum communication in terms of the smoothed conditional min-entropy \cite[Proposition~20]{morgan2013pretty} (taken from \cite{berta2009single} and also employed in \cite{tomamichel2016quantum}), and subsequently making use of the relationship between the smoothed conditional min-entropy and the hypothesis testing divergence (see \cite{tomamichel2013hierarchy, anshu2019minimax, regula2025tight} for increasingly tights bounds).
\end{itemize}
Following these adjustments, we obtain the capacity characterization:
\begin{equation}
Q_\fnc{fqavc}^\fnc{SR}(\mathcal{N}_{\sys{AE}\to\sys{B}}) = \multiadjustlimits{
    \lim_{n\to\infty},
    \inf_{\sigma_{\syss{E}_1^n}\in\densop(\hilbert_{\syss{E}_1^n})},
    \sup_{\rho_{\syss{A}_1^n}}} \frac{1}{n} I_c(\hat{\syss{A}}_1^n\rangle\syss{B}_1^n)_{\left(\id_{\hat{\syss{A}}_1^n}\otimes\mathcal{N}_{\sys{AE}\to\sys{B}}^{\tensor n}\right)\big(\proj{\rho}_{\hat{\syss{A}}_1^n\syss{A}_1^n}\tensor\sigma_{\syss{E}_1^n}\big)}.
\end{equation}
\paragraph{Infinite dimensional jammers.}
In this work, we have assumed the adversary system $\sys{E}$ to be finite-dimensional, even though one of the motivations of this study is to relax such a restriction by avoiding reliance on the de Finetti reduction technique.
For infinite-dimensional system $\sys{E}$, however, the set of all density operators $\densop(\hilbert_\sys{E})$ is no longer compact, which prevents the application of Sion’s minimax theorem to~\eqref{eq:eps:M:fqavc:1}.
A possible way to circumvent this issue is to impose additional conditions on the channel $\mathcal{N}_{\sys{AE}\to\sys{B}}$ and to work with the double dual space of $\densop(\hilbert_\sys{E})$.
\paragraph{Other Open problems.}
(1) Capacity Dichotomy and Derandomization: An important open question, as pointed out in~\cite{boche2018fully}, is whether the role of shared (classical) randomness remains the same when the quantum adversary system transitions from finite to infinite dimensions.
(2) Higher-Order Analysis: Another interesting direction is the study of higher-order refinements to the capacity theorems.
On the converse side, Theorem~\ref{thm:minimax} suggests that second-order bounds for FQAVCs follow directly from existing results on entanglement-assisted quantum channel capacities.
On the achievability side, however, a more delicate analysis is needed when applying the generalized asymptotic equipartition property (AEP) developed in~\cite{fang2024generalized}.
We leave such investigations for future work.
\section*{Acknowledgments}
We acknowledge funding from the European Research Council (ERC Grant Agreement No.~948139) and the Excellence Cluster Matter and Light for Quantum Computing (ML4Q).
We would like to thank Paula Belzig for discussions on her work~\cite{belzig2024fully}.
We would also like to thank Milán Mosonyi for pointing out the non-compactness of $\densop(\hilbert_\sys{E})$ in the infinite-dimensional case, and Li Gao for further discussions on the topic.

\bibliographystyle{IEEEtran}
\bibliography{reference}

\end{document}